\definecolor{Gray}{gray}{0.5}
\definecolor{Gray}{gray}{0.9}
\newtheorem{rem}{Remark}
\newtheorem{thm}{Theorem}
\newtheorem{assump}{Assumption}
\begin{document}

\title{Cyber-Resilient Data-Driven Event-Triggered Secure Control for Autonomous Vehicles Under False Data Injection Attacks}

\author{Yashar Mousavi,~\IEEEmembership{Member,~IEEE,}
        Mahsa Tavasoli, Ibrahim Beklan Kucukdemiral,~\IEEEmembership{Senior Member,~IEEE,}
        Umit Cali,~\IEEEmembership{Senior Member,~IEEE,} 
        Abdolhossein Sarrafzadeh,
        Ali Karimoddini,~\IEEEmembership{Senior Member,~IEEE,}
        and~Afef~Fekih,~\IEEEmembership{Senior~Member,~IEEE}
        
\thanks{Y. Mousavi and I.B. Kucukdemiral are with the Department of Engineering, School of Science and Engineering, Glasgow Caledonian University, Glasgow G4 0BA, UK; e-mail: yashar.mousavi@gcu.ac.uk, ibrahim.kucukdemiral@gcu.ac.uk.}
\thanks{M. Tavasoli is with the Department of Applied Science \& Technology, North Carolina Agricultural and Technical State University, Greensboro, NC 27411, USA, email: mtavasoli@aggies.ncat.edu}
\thanks{A. Sarrafzadeh is with the Department of Electrical and Computer Engineering, Old Dominion University, Norfolk, VA 23529, USA, email: asarrafz@odu.edu}
\thanks{U. Cali is with the School of Physics, Engineering and Technology, University of York, YO10 5DD York, United Kingdom, and the Department of Electric Energy, Norwegian University of Science and Technology, Trondheim, Norway; email: umit.cali@ntnu.no.}
\thanks{A. Karimoddini is with the Department of Electrical and Computer Engineering, North Carolina Agricultural and Technical State University, Greensboro, NC 27411, USA, email: akarimod@ncat.edu.}
\thanks{A. Fekih is with the Electrical and Computer Engineering Department, University of Louisiana at Lafayette, P.O. Box 43890, Lafayette, LA 70504, USA, email: afef.fekih@louisiana.edu.}
}


\maketitle

\begin{center}
\textbf{Notice:}
This work has been submitted to the IEEE for possible publication.
Copyright may be transferred without notice, after which this version
may no longer be accessible.
\end{center}

\begin{abstract}
This paper proposes a cyber-resilient secure control framework for autonomous vehicles (AVs) subject to false data injection (FDI) threats as actuator attacks. The framework integrates data-driven modeling, event-triggered communication, and fractional-order sliding mode control (FSMC) to enhance the resilience against adversarial interventions. A dynamic model decomposition (DMD)-based methodology is employed to extract the lateral dynamics from real-world data, eliminating the reliance on conventional mechanistic modeling. To optimize communication efficiency, an event-triggered transmission scheme is designed to reduce the redundant transmissions while ensuring system stability. Furthermore, an extended state observer (ESO) is developed for real-time estimation and mitigation of actuator attack effects. Theoretical stability analysis, conducted using Lyapunov methods and linear matrix inequality (LMI) formulations, guarantees exponential error convergence. Extensive simulations validate the proposed event-triggered secure control framework, demonstrating substantial improvements in attack mitigation, communication efficiency, and lateral tracking performance. The results show that the framework effectively counteracts actuator attacks while optimizing communication-resource utilization, making it highly suitable for safety-critical AV applications.
\end{abstract}

\begin{IEEEkeywords}
Autonomous vehicles, data-driven modeling, event-triggered control, secure control, extended state observer.
\end{IEEEkeywords}

\section{Introduction}
\label{sec:1}
\IEEEPARstart{A}UTONOMOUS vehicles (AVs) are transforming modern transportation by integrating advanced communication, sensing, and control technologies to enhance safety, efficiency, and reliability. These systems rely heavily on real-time decision-making to navigate complex environments, demanding robust control mechanisms for precise lateral regulation. Lateral control is particularly important in AVs as it directly influences lane-keeping, path-following, and overall stability. However, with the increasing reliance on networked control systems, AVs face several challenges, including modeling complexities, communication constraints, and cybersecurity threats, all of which necessitate secure and adaptive control strategies.

The significance of AVs extends beyond convenience; they have the potential to drastically reduce traffic-related fatalities, minimize congestion, and optimize energy consumption. Recent advancements in ADAS \cite{argui2024advancements} have paved the way for fully autonomous navigation, with major manufacturers heavily investing in self-driving technologies. Despite these developments, large-scale implementation remains challenging due to the interplay of vehicle dynamics, environmental factors, and cyber-physical security risks. The lateral dynamics of AVs have been extensively studied, with various control approaches proposed to enhance stability and tracking accuracy. Early research focused on conventional methods, such as LQR approaches and fuzzy logic control \cite{arab2020optimal}. While effective for structured environments, these techniques often rely on linearized models that fail to capture nonlinear behavior. To address these limitations, model predictive control \cite{feng2023distributed} and robust control frameworks, including $H_{\infty}$ control \cite{zhu2023survey} and sliding mode control (SMC) \cite{lee2024novel}, have been explored. Among these, SMC has gained significant attention due to its inherent robustness to uncertainties and external disturbances \cite{wang2022path}. However, traditional SMC methods often suffer from chattering effects \cite{mousavi2022sliding}. To mitigate this, higher-order SMCs \cite{sun2022nonsingular} and fractional-order SMCs (FSMCs) \cite{taghavifar2024nonsingleton} have been proposed. FSMC leverages fractional calculus to improve robustness, reduce chattering, and enhance dynamic response, motivating its adoption in this work.

Recent advances in data-driven techniques provide promising solutions for overcoming modeling challenges. Instead of relying on simplified mechanism-based models, data-driven approaches such as DMD \cite{perrusquia2024trajectory} and Koopman operator theory \cite{chen2024incorporating} capture vehicle dynamics directly from sensor data. Additionally, intelligent controllers leveraging deep learning \cite{kuutti2020survey}, reinforcement learning \cite{ma2024game}, and neural networks \cite{hedesh2025delay} have shown significant improvements. However, their reliance on extensive datasets and computationally intensive training limits real-time applicability in safety-critical systems, highlighting the need for computationally efficient control strategies that are robust against cyber-attacks and communication constraints.

Another major challenge is the constraint imposed by limited communication bandwidth. As AVs generate vast amounts of sensor and control data, efficient transmission mechanisms are necessary. Traditional time-driven communication schemes lead to excessive data exchange, overwhelming in-vehicle networks. Event-triggered control (ETC) strategies \cite{zhang2023event,wong2023novel} mitigate these effects by transmitting control updates only when predefined conditions are met, significantly reducing communication demands while maintaining stability. However, ensuring secure transmissions in the presence of cyber threats remains challenging. 

Cyber-attacks pose significant challenges in AV control due to the increasing prevalence of FDI attacks and actuator manipulations. In-vehicle networks, such as CAN, are inherently vulnerable due to their lack of built-in security mechanisms \cite{sargolzaei2019detection}. Attackers can exploit these weaknesses by injecting false signals, leading to incorrect control decisions and potential instability, or introducing artificial time delays that disrupt stability mechanisms \cite{oshnoei2023model}. Various attack detection methods, including statistical anomaly detection \cite{zhang2024anomaly}, Kalman filtering \cite{lam2024aoa}, observer-based approaches \cite{shu2024adaptive}, have been developed. However, most solutions focus solely on detection rather than secure control or do not emphasize real-time compensation, leaving AVs susceptible to disruptions even after attack detection. While traditional fault-tolerant control methods have been employed to mitigate actuator failures \cite{lu2021event,sun2023secure}, they remain ineffective against stealthy, model-aware cyber-attacks, necessitating observer-based strategies that can reconstruct true system states in the presence of deceptive attacks.

Given these limitations, this paper introduces a novel data-driven event-triggered secure fractional-order SMC (ETS-FSMC) framework to mitigate actuator attacks and FDI threats in AVs. Unlike existing works that focus on detection or passive mitigation, the proposed method actively neutralizes actuator attacks through an adaptive observer-based approach. While previous observer-based methods \cite{li2024event,shu2024adaptive} primarily estimate attack signatures without real-time compensation or employ static thresholds, our approach differs in three key aspects: (1) it integrates the extended state observer directly within the control loop for simultaneous estimation and compensation, enabling dynamic attack neutralization rather than just detection; (2) it combines fractional-order sliding control with observer-based estimation to provide robust tracking performance even under persistent attacks; and (3) it incorporates event-triggered communication within the secure control framework, addressing both security and efficiency concerns simultaneously—a combination not explored in existing literature. This work focuses on actuator-side FDI attacks that are stealthy, persistent, and bounded, with attackers assumed to have partial or full model knowledge. By integrating data-driven modeling, event-triggered transmission, and secure control mechanisms, this paper establishes a resilient control framework for AV lateral dynamics that explicitly accounts for stealthy, model-aware actuator attacks, ensuring robust estimation, real-time compensation, and resilient closed-loop stability in adversarial settings. The key contributions are:
\begin{itemize}
    \item Developing a data-driven modeling framework leveraging dynamic mode decomposition to extract critical lateral dynamics from real-time data, reducing reliance on conventional mechanistic models.
    \item Developing a secure event-triggered control scheme that optimally balances communication efficiency and system stability, ensuring robustness against transmission delays and network constraints.
    \item Designing a fractional-order sliding mode-based attack mitigation strategy integrating an extended state observer to actively counteract actuator attacks and false data injections, maintaining system stability even under adversarial conditions.
    \item Conducting a comprehensive stability analysis based on Lyapunov theory, providing theoretical guarantees for exponential error convergence and attack resilience.
\end{itemize}

The remainder of this paper is structured as follows. Section \ref{sec:2} formulates the problem and presents the system model. Section \ref{sec:3} introduces the proposed data-driven secure event-triggered control framework. Section \ref{sec:4} presents the main theoretical results. Section \ref{sec:5} presents simulation results and performance evaluations. Finally, Section \ref{sec:6} concludes the study.

\section{Problem Statement and Methodological Framework}
\label{sec:2}

\subsection{Mathematical Modeling of Autonomous Vehicle Lateral Dynamics}
\label{sec:2.1}
The control architecture for AVs fundamentally relies on precise management of lateral motion and yaw dynamics for effective path-following operations. To facilitate systematic analysis of these dynamics, a reduced-order representation based on a two-degree-of-freedom monorail vehicle model for path tracking analysis can be employed (see Fig. \ref{fig:Vhcl}). Supposing the constant-velocity operation with small front wheel angular deflections, the small angle approximation principle ($\cos(\theta^f) \approx 1$) can be applied, enabling to express the lateral dynamics as:
\begin{align}
\label{Eq__1}
m\dot{a}_y &= m(\dot{v}_y + \nu_x\dot{\phi}) = F^f_y\cos(\theta^f) + F^r_y = F^f_y + F^r_y, \nonumber\\
\mathcal I_z\ddot{\phi} &= L^fF^f_y\cos(\theta^f) - L^rF^r_y = L^fF^f_y - L^rF^r_y,
\end{align}
where $m$ denotes the total vehicle mass, $\theta^f$ represents the front wheel steering angle, $\phi$ indicates the vehicle heading angle, and $a_y$ characterizes the lateral acceleration. $\dot{\phi}$ and $\ddot{\phi}$ represent the yaw rate and yaw acceleration, respectively. The vehicle's kinematic state is further described by its longitudinal velocity $\nu_x$ and lateral velocity $\nu_y$, measured at the center of mass. The lateral forces acting on the front and rear tires are denoted by $ F^f_y$ and $F^r_y$, respectively. The rotational dynamics are governed by $\mathcal I_z$, the moment of inertia about the yaw axis, while $L^f$ and $L^r$ represent the respective distances from the center of mass to the front and rear axles. It is worth mentioning that the derivation assumes negligible longitudinal slip and uniform road conditions, ensuring the accuracy of the reduced-order model under small deflections.
\begin{figure}[!t]
\centering
\includegraphics[width=2.7 in]{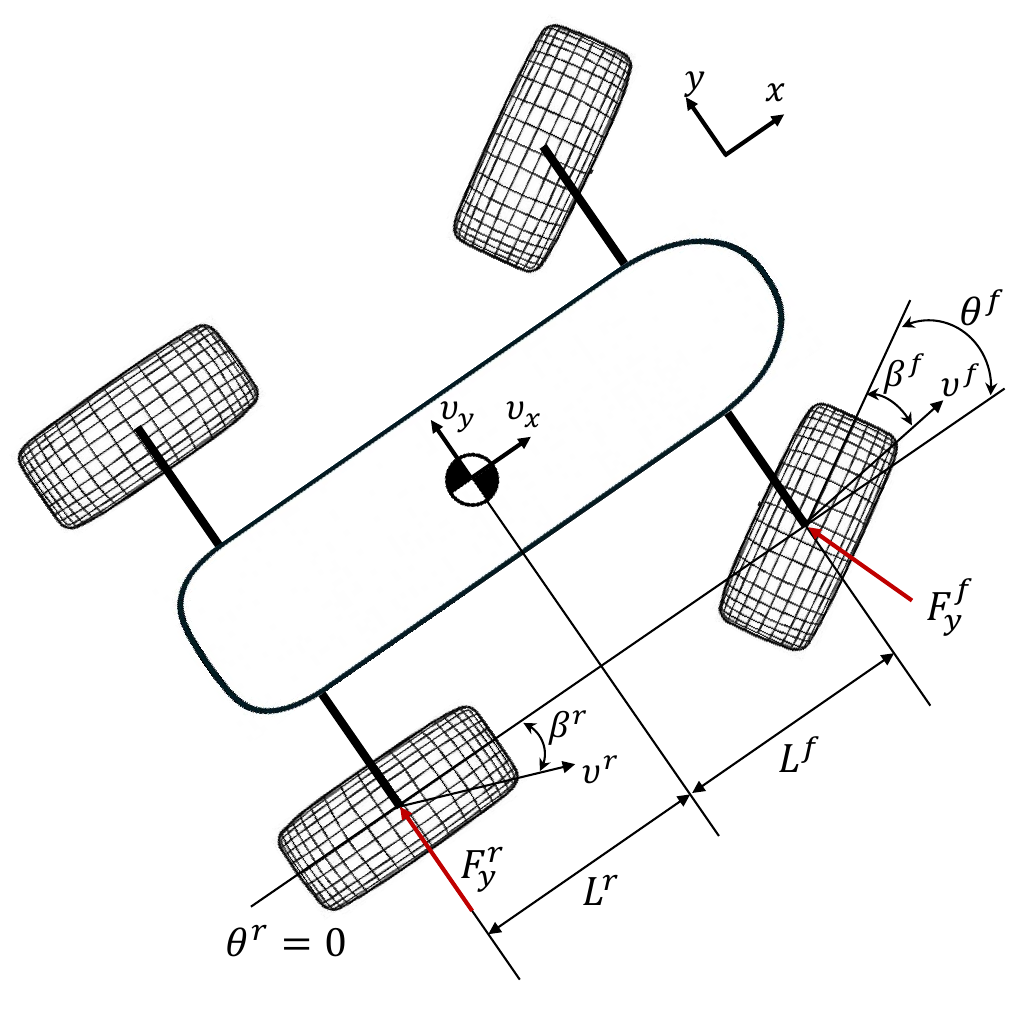}
\caption{Kinematic model of AV lateral dynamic.}
\label{fig:Vhcl}
\end{figure}

\begin{rem}
\label{rem:1}
The small angle approximation represents a mathematically justified simplification under typical operating conditions, where front wheel steering angles remain within a limited range $\theta^f<5^\circ$. This approximation preserves model fidelity while enabling analytical tractability. The application of small angle approximation allows for the systematic elimination of higher-order nonlinear terms, enabling focused analysis of the dominant factors governing vehicle lateral dynamics.
\end{rem}

Under the assumption of small slip angles, which is consistent with normal operating conditions, a linear tire force model can be developed as:
\begin{align}
\label{Eq__2}
F^f_y &= C^f_{\beta}\beta^f = C^f_{\beta}\left(\theta^f - \left(\nu_y+a\dot{\phi}\right)/\nu_x\right), \nonumber\\
F^r_y &= C^r_{\beta}\beta^r = C^r_{\beta}\left(L^r\dot{\phi}-\nu_y\right)/\nu_x,
\end{align}
where  $C^f_{\beta}$ and $C^r_{\beta}$ represent the cornering stiffness coefficients of the front and rear tires, respectively, and the angles $\beta^f$ and $\beta^r$ denote the side slip angles at the front and rear wheels.

\begin{rem}
\label{rem:2}
The linear tire model approximation is mathematically justified under small slip angle conditions, establishing a first-order relationship between tire forces and slip angles that captures the essential dynamics of the system. This approximation is applicable in normal driving scenarios with lateral acceleration below $0.5g$, beyond which nonlinear effects may dominate.
\end{rem}

Under the small-angle approximation established in Remark \ref{rem:1}, the nonlinear tire forces in \eqref{Eq__2} can be linearized. For small slip angles $\beta^f$ and $\beta^r$, the sine functions of these angles can be approximated by the angles themselves ($\sin \beta \approx \beta$), and the cosine terms approach unity ($\cos \beta \approx 1$). Applying these approximations to \eqref{Eq__2} and substituting the linearized tire forces into \eqref{Eq__1} yields the lateral dynamics expressed as:

{\small
\begin{align}
\label{Eq__3}
\dot{v}_y &= -\frac{C^f_{\beta}+C^r_{\beta}}{m\nu_x}\nu_y + \left(-\frac{L^fC^f_{\beta}+L^rC^r_{\beta}}{m\nu_x}-\nu_x\right)\dot{\phi} + \frac{C^f_{\beta}}{m}\theta^f, \\
\ddot{\phi} &= -\frac{L^fC^f_{\beta}-L^rC^r_{\beta}}{\mathcal I_z\nu_x}\nu_y + \left(-\frac{L^{f^2}C^f_{\beta}+L^{r^2}C^r_{\beta}}{\mathcal I_z\nu_x}\right)\dot{\phi} + \frac{L^fC^f_{\beta}}{\mathcal I_z}\theta^f.\nonumber
\end{align}}

\normalsize

In order to capture a precise path following control, two essential error metrics are defined as the lateral position error $e_d$ and the heading angle error $e_\phi$, quantifying the deviation of the vehicle's position and heading from the desired path, governed by:
\begin{align}
\label{Eq__4}
e_\phi &= \phi - \phi_{des} \nonumber\\
\dot{e}_d &= \nu_y + \nu_x(\phi - \phi_{des})
\end{align}
where $\phi_{des}$ represents the desired reference heading angle. 

Incorporating the error definitions \eqref{Eq__4} into the lateral dynamics model \eqref{Eq__3} yields the error dynamics model, which forms the basis for control design:

\small{
\begin{align}
\label{Eq__5}
\ddot{e}_d &= -\frac{C^f_{\beta}+C^r_{\beta}}{m\nu_x}e_d + \frac{C^f_{\beta}+C^r_{\beta}}{m}e_\phi + \left(-\frac{L^fC^f_{\beta}-L^rC^r_{\beta}}{m\nu_x}\right)\dot{e}_\phi  \nonumber\\ 
&\,\,\,\,\, + \frac{C^f_{\beta}}{m}\theta^f + \left(-\nu_x-\frac{L^fC^f_{\beta}-L^rC^r_{\beta}}{m\nu_x}\right)\dot{\phi}_{des},  \nonumber\\
\ddot{e}_\phi &= -\frac{L^fC^f_{\beta}-L^rC^r_{\beta}}{\mathcal I_z\nu_x}e_d + \left(\frac{L^fC^f_{\beta}-L^rC^r_{\beta}}{\mathcal I_z}\right)e_\phi \nonumber \\
&\,\,\,\,\,+ \left(-\frac{L^{f^2}C^f_{\beta}+L^{r^2}C^r_{\beta}}{\mathcal I_z\nu_x}\right)\dot{e}_\phi + \frac{L^fC^f_{\beta}}{\mathcal I_z}\theta^f \nonumber\\ 
&\,\,\,\,\,+ \left(-\frac{L^{f^2}C^f_{\beta}+L^{r^2}C^r_{\beta}}{\mathcal I_z\nu_x}\right)\dot{\phi}_{des}.
\end{align}
}
\normalsize
To facilitate modern control system design, these dynamics are reformulated in state-space form. Defining the state vector $\chi(t)=[e_d~ \dot{e}_d~ e_\phi~ \dot{e}_\phi]^T$ and considering the front wheel steering angle as the control input $u(t) = \theta^f$, one obtains:
\begin{equation}
\label{Eq__6}
\dot{\chi}(t) = \mathcal{H}\chi(t) + \mathcal{G}u(t),
\end{equation}
where the system matrices are explicitly defined as:

\small{
\begin{equation}
\label{Eq__7}
\mathcal{H}=\begin{bmatrix}
0 & 1 & 0 & 0 \\
0 & -\frac{C^f_{\beta}+C^r_{\beta}}{m\nu_x} & \frac{C^f_{\beta}+C^r_{\beta}}{m} & -\frac{L^fC^f_{\beta}-L^rC^r_{\beta}}{m\nu_x} \\
0 & 0 & 0 & 1 \\
0 & -\frac{L^fC^f_{\beta}-L^rC^r_{\beta}}{\mathcal I_z\nu_x} & \frac{L^fC^f_{\beta}-L^rC^r_{\beta}}{\mathcal I_z} & -\frac{L^{f^2}C^f_{\beta}+L^{r^2}C^r_{\beta}}{\mathcal I_z\nu_x}
\end{bmatrix}
\end{equation}}

\normalsize
\begin{equation}
\label{Eq__8}
\mathcal{G}=\begin{bmatrix}0 & \frac{C^f_{\beta}}{m} & 0 & \frac{L^fC^f_{\beta}}{\mathcal I_z}\end{bmatrix}^T.
\end{equation}

\subsection{State Measurement and Data Acquisition}
\label{sec:2.2}
The data-driven secure control framework integrates body sensors, steering actuators, and the ECU through a protected in-vehicle network architecture with safeguards against FDI attacks targeting the steering actuator subsystem. The state variables $e_d(t)$, $\dot{e}_d(t)$, $e_\phi(t)$, $\dot{e}_\phi(t)$, and $\theta^f(t)$ are measured and sampled at consistent intervals to ensure fidelity of the data-driven model. To validate the approach, experimental data from HORIBA MIRA's Autonomous Vehicle Development Centre in Nuneaton, UK, was utilized \cite{ccaar2025}. The dataset was acquired using an instrumented Range Rover research platform equipped with RTK GPS, inertial navigation system, LiDAR sensors, and high-precision steering angle encoders. Data collection involved multiple test runs at speeds between 15-30 m/s, capturing both standard lane-keeping maneuvers and evasive scenarios. The data acquisition system recorded samples at 100 Hz sampling frequency (sampling interval $\ell = 0.01$ s), from which a subset of 5000 samples was utilized. The measurement data comprising $m$ sequential samples is organized into structured matrices:
\begin{equation}
\label{Eq__11}
X =
\begin{bmatrix}
\chi(1) & \chi(2) & \cdots & \chi(m-1)
\end{bmatrix}_{n \times (m-1)},
\end{equation}
where $n$ is the dimension of the state vector $\chi(k)$. 

\begin{rem}
According to Willems' Fundamental Lemma \cite{willems2005note}, for the data-driven representation in \eqref{Eq__11} to be valid, the collected input sequence must be persistently exciting of sufficiently high order. Specifically, the dataset length \( m \) must satisfy $m \geq (n_s + 1)(\ell_s + 1) - 1$, where \( n_s \) is the system order and \( \ell_s \) is the window length or order of excitation. This ensures that the corresponding Hankel matrix has full row rank, enabling accurate reconstruction of all admissible system trajectories \cite{de2019formulas}.
\end{rem}

The corresponding time-shifted state matrix is given by:

\begin{equation}
\label{Eq__12}
X' =
\begin{bmatrix}
\chi(2) & \chi(3) & \cdots & \chi(m)
\end{bmatrix}_{n \times (m-1)}.
\end{equation}

To complete the experimental dataset, the control input sequence matrix can be constructed as:

\begin{equation}
\label{Eq__13}
\Xi =
\begin{bmatrix}
u(1) & u(2) & \cdots & u(m-1)
\end{bmatrix}_{p \times (m-1)},
\end{equation}
where $p$ is the dimension of the control input vector $u(k)$.

The temporal evolution of the lateral dynamics can be represented in discrete-time form:
\begin{equation}
\label{Eq__14}
\chi(k+1) = \mathcal{A}\chi(k) + \mathcal{B}u(k),
\end{equation}
where $\mathcal{A} \in \mathbb{R}^{n \times n}$ represents the state transition matrix and $\mathcal{B} \in \mathbb{R}^{n \times 1}$ denotes the input distribution matrix. 

It is essential to note that the system matrices $\mathcal{A}$ and $\mathcal{B}$, as described in the lateral dynamics model, are initially unknown and need to be identified through system identification techniques. Based on the experimental dataset provided in equations \eqref{Eq__12}-\eqref{Eq__14}, determining these matrices accurately is critical for representing the system's dynamics effectively and ensuring reliable control design.


\begin{rem}
\label{rem:3}
While the mechanism-based model provides a theoretical foundation, practical implementation requires discretization with sampling period \( \ell \) to obtain \eqref{Eq__6}. Note that event-triggered control implies variable transmission instants, not uniform sampling; \( \ell \) serves only as a reference for data-driven approximation. Since inherent nonlinearities, parametric uncertainties, and time-varying characteristics challenge precise mechanism-based modeling, a data-driven approach is adopted to approximate \eqref{Eq__6}. By complementing the reduced-order linear model, the data-driven approach ensures that the framework remains robust and practical, even in the presence of nonlinearities, thereby enhancing its applicability to real-world scenarios.
\end{rem}

\subsection{Efficient Event-Triggered Scheme}
\label{sec:2.3}

Efficient communication and robust control are key objectives in networked control systems for AVs. To achieve these goals, an event-triggered transmission scheme is introduced, which optimizes network resource usage by activating communication only when necessary. Accordingly, the sampling set is defined as $\mathcal{S}_1 = \{0, \ell, 2\ell, \ldots, k\ell, \ldots \}, \quad k \in \mathbb{N}$ where $\ell$ represents the fixed sampling period. The event-triggered transmission set, which contains the instants when data is transmitted, is a subset of \(\mathcal{S}_1\) defined as $\mathcal{S}_2 = \{k_0, k_1, k_2, \ldots, k_s, \ldots \}, \, s \in \mathbb{N}$, where all $k$s are taken from \(\mathcal{S}_1\). The event-triggered transmission mechanism is designed as:
\begin{align}
\label{Eq__15}
k_{s+1} &= \min\big\{k > k_s \, \big| \, \big[\chi(k) - \chi(k_s)\big]^T \Upsilon \big[\chi(k) - \chi(k_s)\big] \\ \nonumber
&\geq \mu \chi^T(k_s) \Upsilon \chi(k_s) \big\},
\end{align}
where \(k_s\) is the latest event-triggered instant, and \(k_{s+1}\) is the next triggered instant, and \(\chi(k_s)\) is the system state at time \(k_s\). The matrix \(\Upsilon\) is positive-definite time-invariant weighting matrix selected based on system dynamics, and \(\mu \in [0, 1]\) is the event-triggered parameter that determines the sensitivity of the triggering mechanism. During the intervals between consecutive transmissions, and for $\forall k \in [k_s, k_{s+1} - 1]$, the state error satisfies:
\begin{equation}
\label{Eq__16}
\big[\chi(k) - \chi(k_s)\big]^T \Upsilon \big[\chi(k) - \chi(k_s)\big] \leq \mu \chi^T(k_s) \Upsilon \chi(k_s),
\end{equation}
and the error dynamics between the sampled and transmitted states are defined as:
\begin{equation}
e(k) = \chi(k) - \chi(k_s),
\label{Eq__17}
\end{equation}
where \(e(k)\) represents the error vector. 

Substituting \eqref{Eq__17} into \eqref{Eq__16}, the triggering condition becomes:
\begin{equation}
\label{Eq__18}
e(k)^T \Upsilon e(k) \leq \mu \chi^T(k_s) \Upsilon \chi(k_s).
\end{equation}

\begin{rem}
\label{rem:3_2}
Substituting \eqref{Eq__17} into \eqref{Eq__16} yields the bounded error condition $\|e(k)\|^2 \leq \mu \|\chi(k_s)\|^2$. Using spectral properties of $\Upsilon$, the error magnitude is bounded by $\|e(k)\| \leq \sqrt{\mu} \|\chi(k_s)\|$, providing a quantitative measure for event sensitivity. The parameter \(\mu\) must be carefully selected to balance communication efficiency and control performance. Smaller \(\mu\) ensures frequent transmissions, enhancing accuracy but increasing overhead, while larger \(\mu\) reduces communication frequency, improving efficiency but potentially degrading performance and delaying responses to system changes. This trade-off highlights the importance of tuning \(\mu\) based on application-specific requirements.
\end{rem}

\subsection{Extended State Observer for Actuator Attack Detection and Compensation}
\label{sec:2.4}

In practical scenarios, networked control systems may be subject to actuator attacks, which can compromise system integrity. Considering the data-driven model \eqref{Eq__14}, the attacks are modeled as follows:
\begin{align}
\label{Eq__19}
\chi(k+1) &= \mathcal{A}\chi(k) + \mathcal{B}\tilde{u}(k), \nonumber \\
\tilde{u}(k) &= u(k) + \alpha_{att}(k),
\end{align}
where \(\tilde{u}(k)\) is the corrupted input due to the FDI attack signal \(\alpha_{att}(k)\). Substituting \(\tilde{u}(k)\) into \eqref{Eq__19} yields:
\begin{equation}
\label{Eq__20}
\chi(k+1) = \mathcal{A}\chi(k) + \mathcal{B}u(k) + \mathcal{B}\alpha_{att}(k).
\end{equation}

\begin{assump}
\label{ass:1}
It is assumed that $\alpha_{att}(k)$ is a slow-varying disturbance, and its magnitude remains bounded such that $\|\alpha_{att}(k)\| \leq \mathcal{Q}_{att}$, where $\mathcal{Q}_{att}$ represents the upper bound of the attack magnitude within the range of $0.05\leq \mathcal{Q}_{att}\leq 0.2$.
\end{assump}


\begin{rem}
\label{rem:555}
While Assumption \ref{ass:1} requires $\|\alpha_{att}(k)\| \leq Q_{att}$ as a necessary condition for theoretical guarantees, it is important to consider implications when this bound is violated. If $\|\alpha_{att}(k)\| > Q_{att}$, attack compensation effectiveness degrades as the actual attack magnitude exceeds the design parameter. The sliding surface experiences increased deviations from zero, and tracking performance deteriorates in proportion to the extent to which the actual attack exceeds $Q_{att}$. The ESO-based estimation may still track the attack pattern but with reduced accuracy, potentially leading to partial compensation. However, as attack magnitude increases significantly beyond $Q_{att}$, the theoretical stability guarantees of Theorem \ref{thm:2} no longer hold, and the system could potentially become unstable. This limitation highlights an important design trade-off in $Q_{att}$: setting it too low may leave the system vulnerable to stronger attacks, while setting it too high may lead to conservative control actions and potential chattering.
\end{rem}

\begin{assump}
\label{ass:2}
The disturbance affecting the system dynamics, including sensor noise and unmodeled dynamics, is assumed to be bounded within \(\|w(k)\| \leq \Gamma\), ensuring that the ESO's performance is not significantly degraded by external noise.
\end{assump}

To estimate the attack signal \(\alpha_{att}(k)\), an ESO is developed. Let us define the augmented state \(\zeta(k)\) as:
\begin{equation}
\zeta(k) = \begin{bmatrix} \chi(k) \\ \alpha_{att}(k) \end{bmatrix},
\label{Eq__21}
\end{equation}
with the corresponding augmented dynamics $\zeta(k+1) = \begin{bmatrix} \mathcal{A} & \mathcal{B} \\ 0 & I \end{bmatrix} \zeta(k) + \begin{bmatrix} \mathcal{B} \\ 0 \end{bmatrix} u(k)$. The ESO can then be designed as:
\begin{equation}
\hat{\zeta}(k+1) = \begin{bmatrix} \mathcal{A} & \mathcal{B} \\ 0 & I \end{bmatrix} \hat{\zeta}(k) + \begin{bmatrix} \mathcal{B} \\ 0 \end{bmatrix} u(k) + L_\zeta\big(y(k) - \hat{y}(k)\big),
\label{Eq__23}
\end{equation}
where \(L_\zeta\) is the observer gain matrix for the augmented system.

\begin{rem}
\label{rem:5}
The design of \(L_\zeta\) must ensure that the augmented error dynamics, defined as \(e_\zeta(k) = \zeta(k) - \hat{\zeta}(k)\), remain stable. The error dynamics are governed by:
\[
e_\zeta(k+1) = \Big(\begin{bmatrix} \mathcal{A} & \mathcal{B} \\ 0 & I \end{bmatrix} - L_\zeta C_\zeta\Big)e_\zeta(k),
\]
where \(C_\zeta = \begin{bmatrix} C & 0 \end{bmatrix}\) with $C \in \mathbb{R}^{m \times n}$. Stability is guaranteed if \(\Big\|\begin{bmatrix} \mathcal{A} & \mathcal{B} \\ 0 & I \end{bmatrix} - L_\zeta C_\zeta\Big\| < 1\).
\end{rem}

The attack estimation is then extracted as $\hat{\alpha}_{att}(k) = \begin{bmatrix} 0 & I \end{bmatrix} \hat{\zeta}(k)$. Given the estimation \(\hat{\alpha}_{att}(k)\), the control input can be redesigned as:
\begin{equation}
u(k) = K\chi(k) - \hat{\alpha}_{att}(k),
\label{Eq__24}
\end{equation}
where \(K\) denotes the state-feedback gain matrix. 

Substituting \eqref{Eq__24} into \eqref{Eq__20} yields the compensated system dynamics as:
\begin{equation}
\chi(k+1) = (\mathcal{A} + \mathcal{B}K)\chi(k) + \mathcal{B}\big(\alpha_{att}(k) - \hat{\alpha}_{att}(k)\big).
\label{Eq__25}
\end{equation}

In addition to ensuring stability through the design of \(L_\zeta\) as described in Remark \ref{rem:5}, the augmented error dynamics should be carefully analyzed to ensure rapid convergence of the estimated states. The error dynamics, defined as \(e_\zeta(k) = \zeta(k) - \hat{\zeta}(k)\), evolve as 
\begin{equation*}
e_\zeta(k+1) = \Big(\begin{bmatrix} \mathcal{A} & \mathcal{B} \\ 0 & I \end{bmatrix} - L_\zeta C_\zeta\Big)e_\zeta(k),
\end{equation*}
where \(C_\zeta = \begin{bmatrix} C & 0 \end{bmatrix}\). For practical implementations, the choice of \(L_\zeta\) must guarantee stability (as per \(\|\mathcal{A} - L_\zeta C\| < 1\)) and achieve a balance between estimation speed and noise rejection. A larger \(L_\zeta\) may lead to faster convergence but amplify noise effects, whereas a smaller \(L_\zeta\) may introduce delays in attack detection.

\begin{rem}
\label{rem:7}
The ESO design assumes that the FDI attack signal \(\alpha_{att}(k)\) evolves as a persistent or slowly varying parameter, reflecting practical FDI scenarios designed to evade detection. While the framework primarily addresses such attack dynamics, the proposed FSMC provides natural rejection capabilities against bounded external disturbances. General disturbances (sensor noise, environmental uncertainties) are modeled as bounded signals (Assumption \ref{ass:2}), while the attack threshold parameter $\mathcal{Q}_{att}$ (Assumption \ref{ass:1}) distinguishes nominal disturbances from malicious interventions. The ESO enables simultaneous estimation of both system state and attack signal, enhancing system resilience when direct attack measurement is infeasible.
\end{rem}

\section{Data-Driven Secure Event-Triggered Control Framework}
\label{sec:3}

This section focuses on developing secure event-triggered control for the networked lateral regulation of AVs using the data-driven model (whose framework is depicted in Fig. \ref{fig:0}). The proposed framework addresses transmission delays, event-triggered communication, and actuator attacks, ensuring both stability and robustness of the control system. Transmission delays are an inherent characteristic of networked control systems, especially when communication is event-triggered. Let \(\delta_k\) represent the delay at time \(k\), where \(\delta_k \in (0, \delta]\) and \(\delta \in \mathbb{N}\). Two primary cases are considered for analyzing and compensating for delays.

\begin{figure}[!t]
\centering
\includegraphics[width=3.5 in]{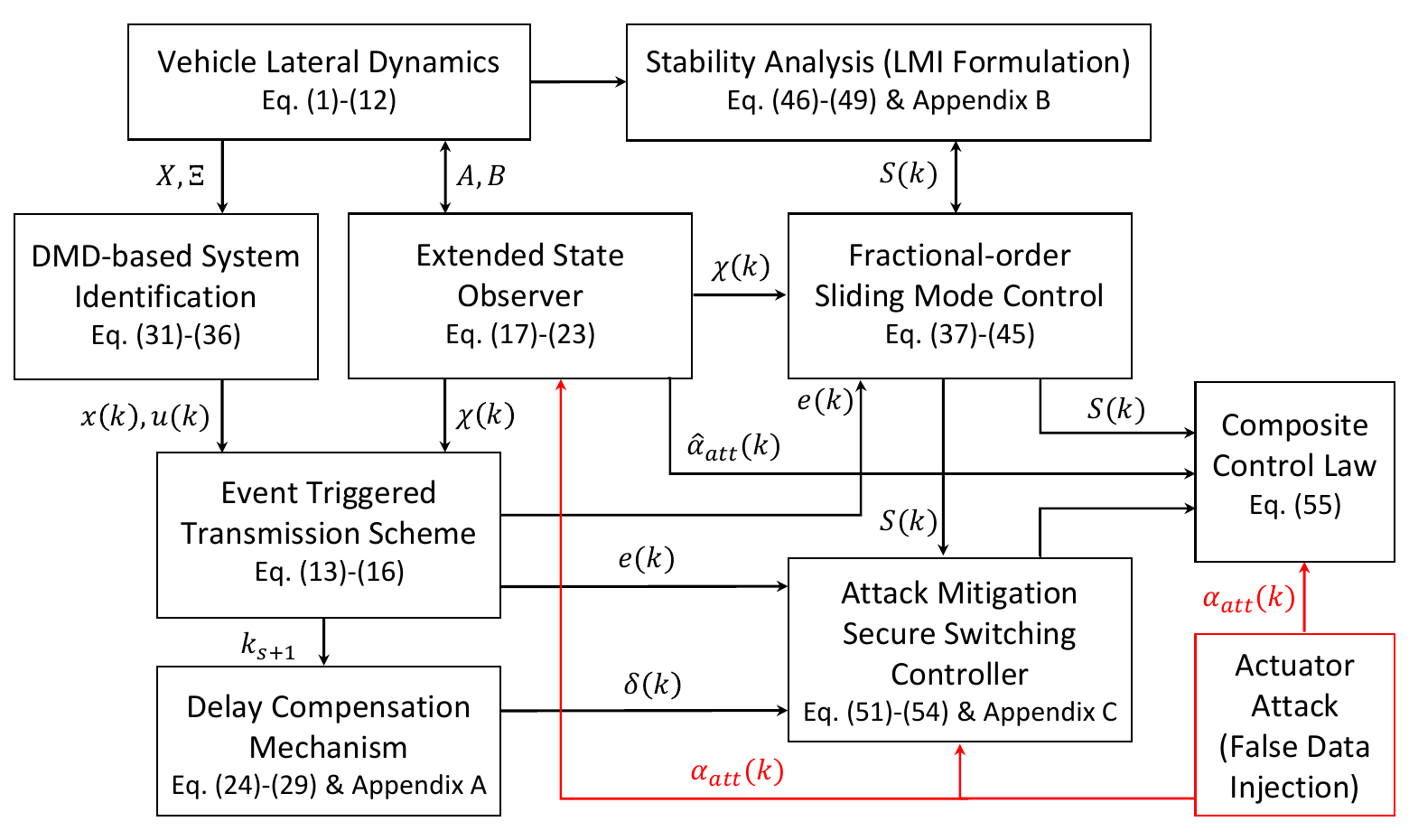}
\caption{Cyber-resilient data-driven ETS-FSMC control framework for AVs under FDI attacks.}
\label{fig:0}
\end{figure}

\subsubsection{Case A - Transmission Delay Within Triggered Instants}
\label{sec:3.1.1}
If \(k_s + \delta + 1 \geq k_{s+1} + \delta_{k_{s+1}} - 1\), an artificial delay \(\delta(k)\) is introduced to align with the event-triggered transmission scheme as
\begin{equation}
\label{Eq__26}
\delta(k) = k - k_s, \quad k \in \big[k_s + \delta_{k_s}, k_{s+1} + \delta_{k_{s+1}} - 1\big],
\end{equation}
that satisfies $\delta_{k_s} \leq \delta(k) \leq k_{s+1} - k_s + \delta_{k_{s+1}} - 1 \leq 1 + \delta$. The bounded nature of \(\delta(k)\) ensures that the control system operates within predictable limits despite variations in transmission delays.

\subsubsection{Case B - Transmission Delay Beyond Triggered Instants}
\label{sec:3.1.2}
If \(k_s + \delta + 1 < k_{s+1} + \delta_{k_{s+1}} - 1\), and having \([k_s + \delta_{k_s}, k_s + \delta]\) and \([k_s + \delta + l, k_s + \delta + l + 1]\), \(l \in \mathbb{Z}^+\), where \(l \geq 1\), $\exists d \; \text{s.t.}\, k_s + d + \delta < k_{s+1} + \delta_{k_{s+1}} - 1 \leq k_s + d + \delta + 1$. Given that \(\chi(k_s)\) and \(\chi(k_s+l)\) with \(l = 0, 1, 2, \dots, d-1\) satisfy the event-triggered condition \eqref{Eq__16}, the time intervals can be divided into three distinct segments to ensure accurate modeling of delays in scenarios where the delay exceeds a single transmission interval as follows:
\begin{align}
\label{Eq__27}
\begin{cases}
\Lambda_1 = [k_s + \delta_{k_s}, k_s + \delta + 1], \\
\Lambda_2 = [k_s + \delta + l, k_s + \delta + l + 1], \, l \in \mathbb{Z}^+, \\
\Lambda_3 = [k_s + \delta + d, k_{s+1} + \delta_{k_{s+1}} - 1].
\end{cases}
\end{align}
where \([k_s + \delta_{k_s}, k_{s+1} + \delta_{k_{s+1}} - 1]\) can be represented as $[k_s + \delta_{k_s}, k_{s+1} + \delta_{k_{s+1}} - 1] = \bigcup_{i=1}^3 \Lambda_i$. 

The artificial delay \(\delta(k)\) for these intervals can be defined as:
\begin{equation}
\label{Eq__28}
\delta(k) =
\begin{cases}
k - k_s, & k \in \Lambda_1, \\
k - k_s - l, & k \in \Lambda_2, \\
k - k_s - d, & k \in \Lambda_3,
\end{cases}
\end{equation}
which leads to
\begin{equation}
\label{Eq__29}
\begin{cases}
\delta_{k_s} \leq \delta(k) \leq 1 + \delta = \bar{\delta}, & k \in \Lambda_1, \\
\delta_{k_s} < \delta(k) \leq \bar{\delta}, & k \in \Lambda_2, \\
\delta_{k_s} < \delta(k) \leq \bar{\delta}, & k \in \Lambda_3.
\end{cases}
\end{equation}

Accordingly, one has \(0 \leq \delta_{k_s} \leq \delta(k) \leq \bar{\delta}\). To manage state deviations due to delays, the error vector \(e(k)\) is defined as  \(e(k) = 0\) for \(k \in [k_s + \delta_{k_s}, k_s+1 + \delta_{k_s+1} - 1]\) for Case A, and 
\begin{equation}
\label{Eq__30}
e(k) =
\begin{cases}
0, & k \in \Lambda_1, \\
\chi(k_s) - \chi(k_s + l), & k \in \Lambda_2, \\
\chi(k_s) - \chi(k_s + d), & k \in \Lambda_3,
\end{cases}
\end{equation}
for Case B. 
\begin{rem}
\label{rem:8_2}
The artificial delay $\delta(k)$ introduces a controlled variation in state dynamics, ensuring that network delays do not disrupt control objectives. However, this approach may amplify errors if communication delays exceed $\bar{\delta}$.
 \end{rem}

\begin{thm}
\label{thm:1}
Given \(e(k)\) in \eqref{Eq__17} and the event-triggered scheme \eqref{Eq__16}, the error dynamics satisfy:
\begin{align}
\label{Eq__31}
e^T(k)\Upsilon e(k) & \leq \mu \chi(k - \delta(k))^T \Upsilon \chi(k - \delta(k)), \nonumber \\
& \text{for} \quad k \in [k_s + \delta_{k_s}, k_{s+1} + \delta_{k_{s+1}} - 1],
\end{align}
where \(\Upsilon\) is a positive-definite weighting matrix, and \(\mu \in [0, 1]\) controls the triggering sensitivity. Accordingly, the stability of the event-triggered control system is ensured if the error vector \(e(k)\) satisfies the event-triggered condition \eqref{Eq__16}.
\end{thm}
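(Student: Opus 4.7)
The plan is to verify \eqref{Eq__31} by decomposing the evaluation window $[k_s + \delta_{k_s}, k_{s+1} + \delta_{k_{s+1}} - 1]$ into the segments already introduced for Cases A and B, and discharging the inequality on each segment either trivially or by directly invoking the event-triggered condition \eqref{Eq__16} at the appropriate reference instant. The stability conclusion will then follow because \eqref{Eq__31} yields a uniform quadratic bound on the sampling-induced error in terms of the delayed state, which is exactly the object that feeds the Lyapunov--Krasovskii analysis developed later.

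For the easy segments, I first treat Case A: the artificial delay \eqref{Eq__26} gives $\delta(k) = k - k_s$, hence $k - \delta(k) = k_s$, while the Case A definition sets $e(k) = 0$ throughout the window. The left-hand side of \eqref{Eq__31} is therefore zero and the right-hand side equals $\mu\chi(k_s)^T\Upsilon\chi(k_s) \geq 0$ by positive-definiteness of $\Upsilon$, so the inequality holds immediately. The sub-interval $\Lambda_1$ in Case B is handled identically, since \eqref{Eq__28} and \eqref{Eq__30} again give $\delta(k) = k - k_s$ and $e(k) = 0$.

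The substantive part concerns $\Lambda_2$ and $\Lambda_3$ in Case B. On $\Lambda_2$ I read off $k - \delta(k) = k_s + l$ and $e(k) = \chi(k_s) - \chi(k_s + l)$ from \eqref{Eq__28}--\eqref{Eq__30}, with the analogous identities on $\Lambda_3$ after substituting $d$ for $l$. Exploiting the symmetry of the quadratic form, the claim reduces to $[\chi(k_s+l) - \chi(k_s)]^T \Upsilon [\chi(k_s+l) - \chi(k_s)] \leq \mu\chi(k_s+l)^T \Upsilon \chi(k_s+l)$, which is the event-triggered condition \eqref{Eq__16} evaluated at the intermediate instant $k_s + l$ once that state is promoted to the role of the active reference. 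Combining the three segments covers $[k_s + \delta_{k_s}, k_{s+1} + \delta_{k_{s+1}} - 1]$ entirely.

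The main obstacle I anticipate lies in this re-referencing step: \eqref{Eq__16} is originally stated relative to $\chi(k_s)$, whereas \eqref{Eq__31} demands a bound relative to the delayed state $\chi(k - \delta(k)) = \chi(k_s + l)$. Making the identification rigorous requires invoking the Case B preamble, which explicitly treats each $\chi(k_s + l)$, $l = 0, 1, \dots, d-1$, as a freshly held sample under prolonged delay and therefore as a legitimate reference for the triggering inequality. Once that interpretation is granted the algebra is immediate; without it, one would have to close the gap with a triangle-inequality argument and an additional small-signal estimate on $\chi(k_s + l) - \chi(k_s)$, which would weaken the bound and complicate the downstream LMI synthesis.
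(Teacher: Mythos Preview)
Your approach differs substantially from the paper's. Rather than working through Cases A/B and the segments $\Lambda_1,\Lambda_2,\Lambda_3$, the paper sets $V(k)=e^T(k)\Upsilon e(k)$ as a Lyapunov candidate, asserts from \eqref{Eq__16} the recursion $V(k+1)\le\mu V(k)$, iterates to $V(k)\le\mu^kV(0)$, and then extracts an exponential decay bound on $\|e(k)\|$ via the extreme eigenvalues of $\Upsilon$. That argument delivers an asymptotic-decay statement about the error rather than a direct verification of \eqref{Eq__31} with $\chi(k-\delta(k))$ on the right-hand side, and the recursion step itself is not an immediate consequence of \eqref{Eq__16} as written. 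Your segment-by-segment verification is the route taken in the standard event-triggered-with-delay literature and is aimed squarely at the inequality actually consumed later in \eqref{Eq__53}; you are also right to flag the re-referencing subtlety on $\Lambda_2$ and $\Lambda_3$, since \eqref{Eq__16} bounds the quadratic form against $\chi(k_s)$ rather than $\chi(k_s+l)=\chi(k-\delta(k))$, and closing that gap rigorously does require exactly the virtual-reference reading of the Case~B preamble that you invoke. In short, your decomposition buys a pointwise bound matched to the downstream LMI analysis, while the paper's Lyapunov recursion buys a clean exponential-rate statement but leaves the connection to \eqref{Eq__31} implicit.
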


The proof is provided in Appendix~\ref{appendix:1}.

\begin{rem}
\label{rem:9}
The Lyapunov-based analysis confirms the exponential decay of the error vector \(e(k)\), ensuring robust stability under the event-triggered control scheme. The exponential rate \(\mu^k\) and the matrix \(\Upsilon\) are pivotal design parameters that significantly influence system performance. The selection of the weighting matrix \(\Upsilon\) directly determines the convergence bounds of \(e(k)\), offering flexibility in shaping error dynamics. Similarly, the parameter \(\mu\) governs the triggering sensitivity, where a smaller \(\mu\) accelerates convergence but results in a higher frequency of triggering events, thereby increasing communication overhead. Consequently, careful tuning of both \(\Upsilon\) and \(\mu\) facilitates an optimal trade-off between control performance and communication efficiency, making the proposed framework highly effective in optimizing networked control systems.
\end{rem}

The dynamics of the system under consideration, which includes event-triggered transmission, actuator attacks, and artificial delays, can be expressed as:
\begin{equation}
\label{Eq__35}
\chi(k+1) = \mathcal{A}\chi(k) + \mathcal{B}\mathcal{K}\chi(k - \delta(k)) + \mathcal{B}\mathcal{K}e(k) + \mathcal{B}\alpha_{att}(k),
\end{equation}
 and the event-triggered transmission is activated according to \eqref{Eq__15} subjected to \eqref{Eq__31}. It is worth noting that \eqref{Eq__15}, \eqref{Eq__16}, and \eqref{Eq__31}  describe similar event-triggered control scheme, providing complementary insights into its operation. In this context, \eqref{Eq__15} defines the event-generation instant \(k_{s+1}\) as the minimum time index satisfying the condition \([\chi(k) - \chi(k_s)]^T \Upsilon [\chi(k) - \chi(k_s)] \geq \mu \chi^T(k_s) \Upsilon \chi(k_s)\). On the other hand, \eqref{Eq__16} specifies the event-triggered condition, ensuring that \([\chi(k) - \chi(k_s)]^T \Upsilon [\chi(k) - \chi(k_s)] \leq \mu \chi^T(k_s) \Upsilon \chi(k_s)\) holds during each interval between triggered instants. Furthermore, \eqref{Eq__31} characterizes the time delay at each triggered instant by incorporating \eqref{Eq__28} and \eqref{Eq__30}, providing a unified description of the delay dynamics.

\begin{rem}
\label{rem:11}
The event-triggered scheme effectively reduces unnecessary data transmission, thereby optimizing the use of communication resources in networked control systems. Furthermore, the actuator attack signal \(\alpha_{att}(k)\) is compensated for using the control law, ensuring robust performance even under malicious interventions. By segmenting delays into distinct cases, the framework ensures precise modeling and compensation of varying delay conditions, maintaining system stability and performance.
\end{rem}


\section{Main Results and System Implementation}
\label{sec:4}
This section presents the secure control design for the developed event-triggered lateral control of AVs based on the data-driven model.

\subsection{Data-Driven Modeling via Dynamic Model Decomposition}
\label{sec:4.1}

To determine the system matrices \(\mathcal{A}\) and \(\mathcal{B}\) from the collected datasets \(X'\), \(X\), and \(\Xi\), the DMD technique is employed, which ensures accurate identification of the unknown dynamics, enabling the design of precise control strategies. Using the relationships defined in \eqref{Eq__11}--\eqref{Eq__14}, the dynamics can be expressed as:
\begin{equation}
X' \approx \Gamma \Theta,
\label{Eq__36}
\end{equation}
where $\Gamma = \begin{bmatrix} \mathcal{A} & \mathcal{B} \end{bmatrix}$, and $\Theta = \begin{bmatrix} X & \Xi \end{bmatrix}^T$, with \(\Gamma \in \mathbb{R}^{n \times (n+1)}\) and \(\Theta \in \mathbb{R}^{(n+1) \times (m-1)}\). 

Considering \eqref{Eq__36}, the system operators can be approximated as
\begin{align}
\label{Eq__37}
\Gamma &= X' \Theta^\dagger, \\
\begin{bmatrix}
\mathcal{A} & \mathcal{B}
\end{bmatrix}
&= X' \begin{bmatrix}
X \\ \Xi
\end{bmatrix}^\dagger,
\end{align}
where \(\dagger\) represents the Moore–Penrose pseudoinverse of the matrix.

Accordingly, to improve the robustness and computational efficiency of the decomposition process, the singular value decomposition (SVD) is applied to \(\Theta\), expressed as:
\begin{align}
\label{Eq__38}
\Theta &= U \Sigma V^H \\ \nonumber
&= \begin{bmatrix}
\tilde{U} & U_\text{rem}
\end{bmatrix}
\begin{bmatrix}
\tilde{\Sigma} & 0 \\ 0 & \Sigma_\text{rem}
\end{bmatrix}
\begin{bmatrix}
\tilde{V}^H \\ V_\text{rem}^H
\end{bmatrix} \approx \tilde{U} \tilde{\Sigma} \tilde{V}^H,
\end{align}
where \(U \in \mathbb{R}^{(n+1) \times (n+1)}\) and \(V \in \mathbb{R}^{(m-1) \times (m-1)}\) are orthogonal matrices, \(\Sigma \in \mathbb{R}^{(n+1) \times (m-1)}\) is a diagonal matrix containing singular values, \(r\) represents the truncation order, and \(\tilde{U}\), \(\tilde{\Sigma}\), and \(\tilde{V}^H\) are truncated components corresponding to the dominant singular values.

The truncated approximation allows the refinement of \(\Gamma\) as:
\begin{equation}
\Gamma \approx \tilde{\Gamma} = X' \tilde{V} \tilde{\Sigma}^{-1} \tilde{U}^H.
\label{Eq__39}
\end{equation}

\(\mathcal{A}\) and \(\mathcal{B}\) can then be extracted from the decomposed operator \(\tilde{\Gamma}\) as:
\begin{align}
\label{Eq__40}
\begin{bmatrix}
\mathcal{A}, \mathcal{B}
\end{bmatrix}
&\approx
\begin{bmatrix}
\tilde{\mathcal{A}}, \tilde{\mathcal{B}}
\end{bmatrix} \\ \nonumber
&=
\begin{bmatrix}
X' \tilde{V} \tilde{\Sigma}^{-1} \tilde{U}_1^H, \, X' \tilde{V} \tilde{\Sigma}^{-1} \tilde{U}_2^H
\end{bmatrix},
\end{align}
where \(\tilde{U}_1^H \in \mathbb{R}^{r \times n}\) and \(\tilde{U}_2^H \in \mathbb{R}^{r \times 1}\) represent the partitioned left singular variables.

\begin{rem}
\label{rem:12}
The choice of truncation order \(r\) in SVD significantly influences the accuracy and computational load of the identified model \cite{gavish5870optimal}. While retaining more singular values improves precision, it also increases computational demand. Accordingly, the truncation order should be chosen to balance accuracy with efficiency based on the application’s requirements.
\end{rem}

\begin{rem}
\label{rem:13}
The selection of the order of an unknown system in data-driven modeling presents a significant challenge. However, by leveraging insights from the mechanism-based model, the order of the studied system can be determined systematically. In this context, the state-space model presented in this work provides guidance, where the primary state variables, such as \(e_d(t)\), \(\dot{e}_d(t)\), \(e_\phi(t)\), and \(\dot{e}_\phi(t)\), serve as a basis for selecting the appropriate order and the corresponding sampling states. Furthermore, the identification \(\mathcal{A}\) and \(\mathcal{B}\) critically depends on solving \eqref{Eq__37}, which relies on the collected datasets \(X'\), \(X\), and \(\Xi\). Typically, solving \(X' \approx \Gamma \Theta\) constitutes an underdetermined problem. To address this, a least-squares solution for \(\Gamma\) can be obtained by minimizing the Frobenius norm of the residual \(\|X' - \Gamma \Theta\|_F\), ensuring a consistent and reliable approximation.
\end{rem}

Algorithm \ref{alg:1} outlines the systematic steps for applying DMD to identify the lateral dynamics of AVs. In Step 1 of Algorithm \ref{alg:1}, selecting a smaller sampling period \(\ell\) improves the fidelity of the identified model but increases computational costs. It is essential to balance the sampling rate with the available computational resources. Additionally, maintaining consistency in the sampling period for both modeling and control implementation ensures seamless integration.

\begin{algorithm}[t!]
\caption{DMD for Lateral Dynamics}
\footnotesize
\label{alg:1}
\begin{algorithmic}[1]
\Require Sampling period \(\ell\), input dataset \(\Xi\), collected state datasets \(X, X'\).
\Ensure Estimated system matrices \(\mathcal{A}\) and \(\mathcal{B}\) for control implementation.
\State \textbf{Step 1: Data Acquisition and Preprocessing}
\Statex  Obtain the state dataset \(X\) and its time-shifted counterpart \(X'\) from measured vehicle states.
\State \textbf{Step 2: Singular Value Decomposition (SVD)}
\Statex  Construct the data matrix \(\Theta = \begin{bmatrix} X & \Xi \end{bmatrix}^T\).
\Statex  Compute the SVD of \(\Theta\) as \(\Theta = U\Sigma V^H\) and retain dominant singular values up to truncation order \(r\).
\State \textbf{Step 3: System Identification}
\Statex  Approximate the operator \(\Gamma = \begin{bmatrix} \mathcal{A} & \mathcal{B} \end{bmatrix}\) using \(\Gamma \approx X' \tilde{V} \tilde{\Sigma}^{-1} \tilde{U}^H\) as per \eqref{Eq__39}.
\Statex Extract \(\mathcal{A}\) and \(\mathcal{B}\) from the decomposed operator following \eqref{Eq__40}.

\State \textbf{Step 4: Output and Storage}
\Statex Store the estimated system matrices \(\mathcal{A}, \mathcal{B}\) for subsequent control design.
\end{algorithmic}
\end{algorithm}

\subsection{Secure Control Design and Stability Analysis}
\label{sec:4.2}
The core concept of the proposed secure control approach consists of two interconnected components. The first involves developing an inherent equivalent controller, \(u_e(k)\), which ensures stability for the nominal system in the absence of actuator attacks. The second focuses on designing a switching controller, \(u_s(k)\), specifically intended to counteract and mitigate the effects of actuator attacks. Given that, the secure SMC structure can be defined as $u(k) = u_e(k) + u_s(k)$. Hence, the FO sliding surface is proposed as

\begin{equation}
\label{Eq__41}
S(k) = \mathcal{F}\chi(k) +   \epsilon(k)+ \lambda_2 \mathcal{D}^\gamma \epsilon(k),
\end{equation}
where $S(0) = 0$, $\mathcal{F} = \mathcal{B}^TP$, \(\epsilon(k)\) represents the error dynamics, $\epsilon(k+1) - \epsilon(k) = \mathcal{F} \chi(k) - \mathcal{F} (\mathcal{A} + \mathcal{B}\mathcal{K})\chi(k)$,  \(\mathcal{D}^\gamma\) is the fractional derivative operator with order \( \gamma \in (0, 1)\), and \(\lambda > 0\) is a tuning parameter. $\mathcal{D}^\gamma \epsilon(k)$ denotes the discrete-time Grünwald-Letnikov fractional derivative \cite{mousavi2023observer}.


It is worth noting that, introducing the matrix \(\mathcal{F}\) in the sliding mode surface \eqref{Eq__41} is fundamental to ensuring the desired system behavior under the proposed control framework, where \(P\) is obtained by solving the Riccati equation $P = \mathcal{A}^T P \mathcal{A} - \mathcal{A}^T P \mathcal{B} \big(R + \mathcal{B}^T P \mathcal{B}\big)^{-1} \mathcal{B}^T P \mathcal{A} + Q$, where \(P\), \(Q\), and \(R\) are symmetric positive definite matrices with appropriate dimensions. \( Q \) is suitably chosen to prioritize the minimization of lateral position error and yaw rate deviations, where higher weights are assigned to the lateral error state components to enforce precise trajectory tracking. On the other hand, \( R \) penalizes excessive control effort to prevent aggressive steering commands, ensuring actuator constraints are respected. It should be noted that the term \(\big(R + \mathcal{B}^T P \mathcal{B}\big)^{-1}\) represents the inverse of the cost-weighted control effect, balancing the trade-off between control effort and state regulation. To elaborate further, the role of \(\mathcal{F}\) in the sliding mode surface can be understood as a projection operator that determines how the state vector \(\chi(k)\) influences the surface dynamics. By substituting \(\mathcal{F} \) into \eqref{Eq__41}, the sliding mode surface becomes:
\begin{equation}
\label{Eq__42}
S(k) = \mathcal{B}^T P \chi(k) +  \epsilon(k) + \lambda_2 \mathcal{D}^\gamma \epsilon(k).
\end{equation}

In this context, the computation of \(P\) through the Riccati equation ensures that the system remains stable by minimizing a quadratic cost function associated with the state and control effort. The matrix \(P\) essentially captures the cost-to-go in an optimal control sense, providing a systematic way to design \(\mathcal{F}\) for achieving robust sliding mode dynamics.

\begin{thm}
\label{thm:2}
Consider the data-driven control model under actuator attacks as defined in \eqref{Eq__20} and governed by the event-triggered scheme in \eqref{Eq__15}. The system can be stabilized using a control law consisting of an equivalent state feedback controller $u_e(k) = -\kappa \operatorname{sgn}(S(k)) + \mathcal{K}\chi(k_s)$ to address nominal dynamics and a switching controller \(u_s(k_s) = -\alpha_{att}(k)\) to mitigate actuator attacks, ensuring overall system stability.
\end{thm}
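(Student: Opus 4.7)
The plan is to establish stability via a Lyapunov approach centered on the fractional-order sliding surface $S(k)$ defined in \eqref{Eq__41}. First, I would introduce the candidate Lyapunov function $V(k) = \tfrac{1}{2}S^T(k)S(k)$ and analyze the reaching condition by computing the forward difference $\Delta V(k) = V(k+1) - V(k)$ along the closed-loop trajectories of \eqref{Eq__35}. Using the definition of $S(k)$ together with the attacked dynamics $\chi(k+1) = \mathcal{A}\chi(k) + \mathcal{B}u(k) + \mathcal{B}\alpha_{att}(k)$, I would expand $S(k+1) - S(k)$ as a sum of three contributions: the drift through $\mathcal{F}[\chi(k+1)-\chi(k)]$, the error increment $\epsilon(k+1) - \epsilon(k) = \mathcal{F}\chi(k) - \mathcal{F}(\mathcal{A}+\mathcal{B}\mathcal{K})\chi(k)$ (which is by construction available through the nominal feedback term), and the fractional term $\lambda_2 \mathcal{D}^\gamma \epsilon(k)$.

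Next I would substitute the composite control $u(k) = u_e(k) + u_s(k)$. The switching component $u_s(k_s) = -\alpha_{att}(k)$ (in the ideal case; in the observer-based implementation, the ESO estimate $\hat{\alpha}_{att}(k)$ from Section~\ref{sec:2.4} is used and the residual $\alpha_{att}-\hat{\alpha}_{att}$ is absorbed into a bounded perturbation term, which is admissible under Assumption~\ref{ass:1}). This substitution cancels the $\mathcal{B}\alpha_{att}(k)$ component from $S(k+1)-S(k)$. The equivalent component $u_e(k) = -\kappa\,\mathrm{sgn}(S(k)) + \mathcal{K}\chi(k_s)$ then yields a discrete reaching law of the form $S(k+1) - S(k) = -\kappa\,\mathcal{F}\mathcal{B}\,\mathrm{sgn}(S(k)) + \Delta_{\text{ET}}(k) + \Delta_{\text{frac}}(k)$, where $\Delta_{\text{ET}}(k)$ collects the event-triggered mismatch $\mathcal{B}\mathcal{K}e(k)$ and $\Delta_{\text{frac}}(k)$ collects the Gr\"unwald--Letnikov residual. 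Choosing $\kappa$ sufficiently large (exceeding the combined bound of the two perturbations and the uncompensated attack residual) ensures $\Delta V(k) \le -\eta|S(k)|$ for some $\eta > 0$, proving finite-time convergence of $S(k)$ to a bounded neighborhood of the origin.

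To close the argument, I would invoke Theorem~\ref{thm:1} to bound $\Delta_{\text{ET}}(k)$: since $e^T(k)\Upsilon e(k) \le \mu\,\chi^T(k-\delta(k))\Upsilon\chi(k-\delta(k))$, the event-triggered perturbation is a $\sqrt{\mu}$-scaled multiple of the delayed state, so it can be dominated by the reaching term whenever $\mu$ is chosen compatibly with $\kappa$. On the sliding manifold $\{S(k)=0\}$, the identity $\mathcal{F}\chi(k) = -\epsilon(k) - \lambda_2\mathcal{D}^\gamma\epsilon(k)$ together with the Riccati structure of $P$ guarantees that the reduced-order dynamics are governed by $\mathcal{A} + \mathcal{B}\mathcal{K}$, which is Schur-stable by construction. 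Combining reaching and sliding phases, one obtains exponential convergence of $\chi(k)$ to a bounded residual set whose size is dictated by $\Gamma$, $\mathcal{Q}_{att}$, $\mu$, and the ESO estimation accuracy.

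The main obstacle I anticipate is the rigorous handling of the discrete-time fractional derivative $\mathcal{D}^\gamma\epsilon(k)$, because the Gr\"unwald--Letnikov operator introduces a non-Markovian memory term that prevents a clean one-step Lyapunov decrement. I would address this by exploiting the summability of the Gr\"unwald--Letnikov coefficients to bound $|\mathcal{D}^\gamma\epsilon(k+1) - \mathcal{D}^\gamma\epsilon(k)|$ by a finite multiple of $\sup_{j\le k}|\epsilon(j)|$, thereby folding $\Delta_{\text{frac}}(k)$ into the perturbation term absorbed by $\kappa$. A secondary obstacle is reconciling the stated switching law $u_s(k_s)=-\alpha_{att}(k)$ (which assumes exact knowledge of the attack) with the ESO-based implementation; this would be handled by treating the estimation error $\tilde{\alpha}_{att}(k) = \alpha_{att}(k) - \hat{\alpha}_{att}(k)$ as an exogenous bounded disturbance whose magnitude is controlled by the observer gain $L_\zeta$ designed in Remark~\ref{rem:5}.
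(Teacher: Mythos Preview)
Your proposal is correct and follows essentially the same approach as the paper: both use the Lyapunov candidate $V(k)=\tfrac{1}{2}S^2(k)$, expand $S(k+1)-S(k)$ via the attacked dynamics and the $\epsilon$-increment identity, cancel $\mathcal{B}\alpha_{att}(k)$ through $u_s$, and then obtain a reaching inequality by choosing $\kappa$ large enough. Your treatment is in fact more careful than the paper's on two points the paper handles only informally---the Gr\"unwald--Letnikov memory residual and the event-triggered mismatch bound via Theorem~\ref{thm:1}---and your discussion of the ESO residual anticipates exactly what the paper relegates to Remark~\ref{rem:155}.
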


\begin{proof}
As mentioned in Assumption \ref{ass:1}, the boundedness condition $\|\alpha_{att}(k)\| \leq \mathcal{Q}_{att}, \quad \forall k \geq 0,$ is considered, ensuring that the switching control law $u_s(k) = -\alpha_{att}(k)$ effectively cancels actuator faults. In this context, substituting \( u_s(k) \) into \( u(k) \) yields $u(k) = -\kappa \operatorname{sgn}(S(k)) + \mathcal{K} \chi(k_s) - \alpha_{att}(k)$. Since \( u_s(k) \) is designed to be equal and opposite to \( \alpha_{att}(k) \), the actuator fault is effectively neutralized, leading to the simplified control input $u(k) = -\kappa \operatorname{sgn}(S(k)) + \mathcal{K} \chi(k_s)$.

Then, considering the stability condition of the discrete-time SMC scheme, the sliding surface satisfies $S(k+1) - S(k) = 0$. Expanding \(S(k+1) - S(k)\) using the definition of the sliding surface with the fractional derivative term yields,
\begin{align}
\label{Eq__43}
S(k+1) - S(k) &= \mathcal{F}\chi(k+1) - \mathcal{F}\chi(k) + \epsilon(k+1) - \epsilon(k) \nonumber \\
&\quad+ \lambda_2 (\mathcal{D}^\gamma \epsilon(k+1) - \mathcal{D}^\gamma \epsilon(k)) \nonumber \\
&= \mathcal{F}\big[\mathcal{A}\chi(k) + \mathcal{B}u(k) + \mathcal{B}\alpha_{att}(k)\big] - \mathcal{F}\chi(k) \nonumber \\
&\quad + \mathcal{F}\chi(k) - \mathcal{F}(\mathcal{A} + \mathcal{B}\mathcal{K})\chi(k) \nonumber \\
&\quad+ \lambda_2 (\mathcal{D}^\gamma \epsilon(k+1) - \mathcal{D}^\gamma \epsilon(k)) \nonumber \\
&= \mathcal{F}\mathcal{B}u(k) + \mathcal{F}\mathcal{B}\alpha_{att}(k) - \mathcal{F}\mathcal{B}\mathcal{K}\chi(k) \nonumber \\
&\quad+ \lambda_2 (\mathcal{D}^\gamma \epsilon(k+1) - \mathcal{D}^\gamma \epsilon(k)).
\end{align}

To satisfy \(S(k+1) - S(k) = 0\), the control law \(u(k)\) must ensure:
\begin{align}
\label{Eq__44}
\mathcal{F}\mathcal{B}u(k) &+ \mathcal{F}\mathcal{B}\alpha_{att}(k) - \mathcal{F}\mathcal{B}\mathcal{K}\chi(k)  \nonumber \\ 
&+ \lambda_2 (\mathcal{D}^\gamma \epsilon(k+1) - \mathcal{D}^\gamma \epsilon(k)) = 0.
\end{align}

Substituting \( u(k) \), one obtains
\begin{align}
\label{Eq__44a}
&-\mathcal{F}\mathcal{B} \kappa \operatorname{sgn}(S(k)) + \mathcal{F}\mathcal{B} \mathcal{K} (\chi(k_s) - \chi(k)) \nonumber \\
&+ \mathcal{F}\mathcal{B} \alpha_{att}(k) + \lambda_2 (\mathcal{D}^\gamma \epsilon(k+1) - \mathcal{D}^\gamma \epsilon(k)) = 0.
\end{align}

Under the event-triggered scheme, \(u(k)\) can be rewritten as:
\begin{align}
\label{Eq__45}
u(k) &= -\kappa \operatorname{sgn}(S(k)) + \mathcal{K}\chi(k_s) + u_s(k) \nonumber \\
&= -\kappa \operatorname{sgn}(S(k)) + \mathcal{K}\chi(k - \delta(k)) + \mathcal{K}e(k) + u_s(k),
\end{align}
where \(\delta(k)\) represents the communication delay and \(e(k) = \chi(k) - \chi(k_s)\) is the error between the transmitted and current states.

Substituting \eqref{Eq__45} into \eqref{Eq__19} and incorporating the fractional derivative term yields
\begin{align}
\chi(k+1) &= \mathcal{A}\chi(k) + \mathcal{B}\mathcal{K}\chi(k - \delta(k)) + \mathcal{B}\mathcal{K}e(k) \nonumber \\
&\quad - \mathcal{B}\kappa \operatorname{sgn}(S(k)) + \lambda_2 \mathcal{D}^\gamma \epsilon(k).
\label{Eq__46}
\end{align}

To stabilize the system's lateral dynamics in the presence of actuator attacks, the switching controller \( u_s(k_s) = -\alpha_{att}(k) \) is employed to cancel the impact of the attack signal directly. Combining this with the equivalent controller \( u_e(k) = -\kappa \operatorname{sgn}(S(k)) + \mathcal{K}\chi(k_s) \), the overall system dynamics are stabilized, ensuring the convergence of the sliding surface \( S(k) \) to zero.
\end{proof}

\begin{rem}
\label{rem:155}
It is worth noting that, while Theorem \ref{thm:2} establishes that $u_s(k) = -\alpha_{att}(k)$ effectively cancels the attack impact, in practical implementations, the estimated attack signal $\hat{\alpha}_{att}(k)$ will inevitably contain estimation errors. These errors result in residual attack effects that can be expressed as $\tilde{\alpha}_{att}(k) = \alpha_{att}(k) - \hat{\alpha}_{att}(k)$. This residual attack term acts as a bounded disturbance on the system, where $\|\tilde{\alpha}_{att}(k)\| \leq \epsilon_{est}$ with $\epsilon_{est}$ representing the upper bound of estimation error. The proposed FSMC inherently provides robustness against such bounded disturbances through its switching term. By selecting the controller parameter $\kappa$ to satisfy $\kappa > \|FB\tilde{\alpha}_{att}(k)\|$, the sliding surface convergence can still be guaranteed despite estimation errors.
\end{rem}

\begin{rem}
\label{rem:16}
Theorem \ref{thm:2} establishes the groundwork for developing a secure control strategy for the lateral dynamics of AVs. However, it does not provide explicit details on the design of the feedback gain \(\mathcal{K}\) in the equivalent controller \(u_e(k)\) or the switching controller \(u_s(k)\). Notably, \(u_e(k)\) and \(u_s(k)\) can be formulated independently based on the principles outlined in Theorem \ref{thm:2}. Unlike conventional approaches, the updated control law incorporates a discontinuous reaching law term \( -\kappa \operatorname{sgn}(S(k)) \), which enhances robustness against uncertainties and disturbances. This modification ensures that the sliding surface \( S(k) \) is driven to zero in finite time, reinforcing stability against external perturbations and modeling inaccuracies. Moreover, the nominal event-triggered control system described in \eqref{Eq__46} inherently ensures the stability of the lateral dynamics \eqref{Eq__20}, provided that the switching controller effectively compensates for actuator attacks by setting \(u_s(k_s) = -\alpha_{att}(k)\). The subsequent Theorem \ref{thm:3} elaborates on the design criteria for the event-triggered parameter \(\mu\), ensuring the stability of the nominal event-triggered control system under the given conditions.
\end{rem}

\subsubsection{Stability Analysis}
\label{sec:4.2.1}
The stability of the proposed FSMC system can be analyzed using a Lyapunov candidate function:
\begin{equation}
\label{Eq__46a}
\mathcal{V}(k) = \frac{1}{2} S^2(k),
\end{equation}
where \(\mathcal{V}(k)\) is a positive definite function. Considering the approximation \( S(k+1) + S(k) \approx 2S(k) \), which is valid in the neighborhood of the discrete sliding surface~\cite{samantaray2020discrete}, the discrete-time difference yields
\begin{align}
\label{Eq__46b}
\Delta \mathcal{V}(k) &= \mathcal{V}(k+1) - \mathcal{V}(k) \nonumber \\
&= S(k)\big(S(k+1) - S(k)\big).
\end{align}

Substituting \eqref{Eq__41} into \eqref{Eq__46b} yields:
\begin{align}
\Delta \mathcal{V}(k) &= S(k)\Big[\mathcal{F}\big[\chi(k+1) - \chi(k)\big] \nonumber \\
& \quad + \lambda_2 \big[\mathcal{D}^\gamma \epsilon(k+1) - \mathcal{D}^\gamma \epsilon(k)\big]  + \mathcal{F}\mathcal{B}u(k)\Big].
\label{Eq__46c}
\end{align}

Replacing \(u(k)\) with the control law and substituting \(u_s(k)\) yields
\begin{align}
\label{Eq__46c}
\Delta \mathcal{V}(k) &= S(k)\Big[\mathcal{F}\mathcal{B}\big(\mathcal{K}\chi(k_s) - \kappa \operatorname{sgn}(S(k))\big) - \mathcal{F}\mathcal{B}\mathcal{K}\chi(k) \Big].
\end{align}

By choosing \(\kappa > |\mathcal{F}\mathcal{B}\alpha_{att}(k)|\), the Lyapunov function difference satisfies $\Delta \mathcal{V}(k) \leq -\eta S^2(k), \eta > 0$, ensuring that \(S(k) \to 0\) as \(k \to \infty\), which guarantees finite-time convergence of the sliding surface.

\begin{rem}
\label{rem:FSMC}
The inclusion of fractional-order dynamics in the sliding surface introduces memory effects, enhancing the robustness and flexibility of the control system. The parameter \(\gamma\) defines the degree of memory, with \(\gamma \to 0\) approximating a conventional SMC system and higher \(\gamma\) values introducing smoother control actions. Furthermore, the term \(\lambda_2 \mathcal{D}^\gamma \epsilon(k)\) in the sliding surface and control law provides additional tuning flexibility. Proper selection of \(\lambda_2\), \(\kappa\), and \(\gamma\) ensures a balance between robustness, chattering suppression, and tracking accuracy.
\end{rem}


\begin{thm}
\label{thm:3}
For a given scalar delay bound \(\bar{\delta}\), the nominal event-triggered control system \eqref{Eq__46} (without actuator attacks) achieves asymptotic stability under the event-triggered scheme \eqref{Eq__16}, provided the following conditions are met: a) an event-triggered parameter \(\mu\) is defined to control the triggering sensitivity, b) a feedback gain matrix \(\mathcal{K}\) is chosen to regulate the system dynamics, and c) positive definite matrices \(\mathcal{P}^*\), \(\mathcal{R}^*\), \(\mathcal{T}^*\), and \(\Upsilon\) are selected, satisfying the following LMI:

\begin{equation}
\Psi =
\begin{bmatrix}
\Psi_{11} & \Psi_{12} \\
* & \Psi_{22}
\end{bmatrix}
< 0
\label{Eq__47}
\end{equation}
\sloppy where $\Psi_{11} = [(1,1)= \mathcal{P}^*(\mathcal{A} - I) + (\mathcal{A} - I)^T\mathcal{P}^* + \mathcal{R}^* - \mathcal{T}^*$, $(1,2) = 2\mathcal{P}^*\mathcal{B}\mathcal{K} + \mathcal{T}^*$, $\quad (1,4) = 2\mathcal{P}^*\mathcal{B}\mathcal{K}$, $\quad (2,1) = \mathcal{T}^*$, $\quad (2,2) = -2\mathcal{T}^* + \mu\Upsilon$, $(2,3) = \mathcal{T}^*$, $\quad (3,2) = \mathcal{T}^*$, $\quad (3,3) = -\mathcal{T}^* - \mathcal{R}^*$, $\quad (4,4) = -\Upsilon]$, and $\Psi_{12} = \left[F^T\mathcal{P}^*\bar{\delta}F^T\mathcal{T}^*\right]$,  with $F = [\mathcal{A} - I \, \mathcal{B}\mathcal{K} \, 0 \, \mathcal{B}\mathcal{K} - \kappa \operatorname{sgn}(S(k))]$, $\quad \Psi_{22} = \text{diag}\{-\mathcal{P}^*, -\mathcal{T}^*\}$.
\end{thm}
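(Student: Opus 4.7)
The plan is to construct a discrete-time Lyapunov--Krasovskii functional (LKF) that captures both the instantaneous quadratic energy and the memory effect of the bounded delay $\delta(k) \in [0, \bar{\delta}]$, then absorb the event-triggered error via an S-procedure argument, and finally unwind the quadratic dependence on the closed-loop matrix $F$ through a Schur complement. Concretely, I would choose
\[
V(k) = \chi^T(k)\mathcal{P}^*\chi(k) + \sum_{i=k-\bar{\delta}}^{k-1}\chi^T(i)\mathcal{R}^*\chi(i) + \bar{\delta}\sum_{j=-\bar{\delta}}^{-1}\sum_{i=k+j}^{k-1}\eta^T(i)\mathcal{T}^*\eta(i),
\]
where $\eta(i) = \chi(i+1) - \chi(i)$. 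Positive definiteness of $V(k)$ follows directly from $\mathcal{P}^*, \mathcal{R}^*, \mathcal{T}^* \succ 0$, so $V$ is a valid Lyapunov candidate.

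Next I would compute the forward difference $\Delta V(k) = V(k+1) - V(k)$ along trajectories of the nominal closed loop \eqref{Eq__46} (with $\alpha_{att}(k) \equiv 0$), introducing the augmented vector
\[
\xi(k) = \bigl[\chi^T(k),\; \chi^T(k-\delta(k)),\; \chi^T(k-\bar{\delta}),\; e^T(k)\bigr]^T,
\]
so that $\eta(k) = F\,\xi(k)$ with $F$ exactly as defined in the theorem. The $\mathcal{P}^*$-term contributes the $(\mathcal{A}-I)$-type block at $\Psi_{11}(1,1)$ plus a quadratic residue $\xi^T F^T\mathcal{P}^* F\xi$ to be deferred to the Schur block. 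The $\mathcal{R}^*$-term yields $\chi^T(k)\mathcal{R}^*\chi(k) - \chi^T(k-\bar{\delta})\mathcal{R}^*\chi(k-\bar{\delta})$, which populates the $\mathcal{R}^*$ entries on the diagonal of $\Psi_{11}$.

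The third term $V_3$ generates $\bar{\delta}^2\eta^T(k)\mathcal{T}^*\eta(k) - \bar{\delta}\sum_{i=k-\bar{\delta}}^{k-1}\eta^T(i)\mathcal{T}^*\eta(i)$. I would split the inner sum at $i = k-\delta(k)$ and apply the discrete Jensen (reciprocally convex) summation inequality to each piece, producing cross-terms of the form $-[\chi(k)-\chi(k-\delta(k))]^T\mathcal{T}^*[\cdot]$ and $-[\chi(k-\delta(k))-\chi(k-\bar{\delta})]^T\mathcal{T}^*[\cdot]$; after expansion these account for the $\mathcal{T}^*$-coupled entries $(1,2)$, $(2,1)$, $(2,2)$, $(2,3)$, $(3,2)$, $(3,3)$ of $\Psi_{11}$, while $\bar{\delta}^2\eta^T(k)\mathcal{T}^*\eta(k) = \xi^T F^T(\bar{\delta}^2\mathcal{T}^*)F\xi$ is again held aside. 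The event-triggered condition supplied by Theorem~\ref{thm:1} gives $\mu\chi^T(k-\delta(k))\Upsilon\chi(k-\delta(k)) - e^T(k)\Upsilon e(k) \ge 0$, which I would add to $\Delta V(k)$ (S-procedure) to obtain the $+\mu\Upsilon$ entry at $(2,2)$ and $-\Upsilon$ at $(4,4)$. Collecting everything,
\[
\Delta V(k) \le \xi^T(k)\bigl[\Psi_{11} + F^T(\mathcal{P}^* + \bar{\delta}^2\mathcal{T}^*)F\bigr]\xi(k),
\]
and a direct Schur complement on the rank-two quadratic term $F^T(\mathcal{P}^* + \bar{\delta}^2\mathcal{T}^*)F$, using $\Psi_{12} = [F^T\mathcal{P}^*,\; \bar{\delta}F^T\mathcal{T}^*]$ and $\Psi_{22} = \operatorname{diag}\{-\mathcal{P}^*, -\mathcal{T}^*\}$, turns the condition into exactly the LMI \eqref{Eq__47}. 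Negative definiteness of $\Psi$ then yields $\Delta V(k) < 0$ for $\xi(k) \neq 0$, proving asymptotic stability.

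The main obstacle will be the tightness of the summation-inequality step. A raw Jensen bound on $-\bar{\delta}\sum\eta^T\mathcal{T}^*\eta$ across the time-varying window $[k-\bar{\delta},\,k-1]$ is typically too conservative to recover the exact cross-structure of $\Psi_{11}$; the reciprocally convex combination lemma of Park et al.\ is essentially needed so that the resulting bound stays affine in $\delta(k)$ and the two sub-sums combine into the symmetric $\mathcal{T}^*$ block without leaving a $\delta(k)$-dependent residue. A secondary subtlety is the switching term $-\mathcal{B}\kappa\operatorname{sgn}(S(k))$ embedded in the last column of $F$: since $\operatorname{sgn}(\cdot)$ is not linear in $\xi(k)$, I would interpret it either as an exogenous bounded perturbation that vanishes on the sliding manifold (giving asymptotic stability in the ultimate-boundedness sense consistent with Section~\ref{sec:4.2.1}) or absorb it by completion of squares against the $-\Upsilon$ diagonal block, both of which preserve the LMI form.
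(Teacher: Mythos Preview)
Your proposal is correct and follows essentially the same route as the paper: the same three-term Lyapunov--Krasovskii functional, the same augmented vector $\theta(k)=\operatorname{col}\{\chi(k),\chi(k-\delta(k)),\chi(k-\bar\delta),e(k)\}$, the Jensen-type split of the double sum at $k-\delta(k)$, the S-procedure with the triggering bound \eqref{Eq__31}, and a Schur complement to extract $F^T\mathcal{P}^*F+\bar\delta^2 F^T\mathcal{T}^*F$. Two small deviations: the paper gets by with the plain discrete Jensen inequality (no reciprocally convex lemma is invoked), and it disposes of the $\operatorname{sgn}(S(k))$ contribution simply by observing that it produces the additional sign-definite term $-\kappa\operatorname{sgn}(S(k))^T\mathcal{T}^*\operatorname{sgn}(S(k))\le 0$ in \eqref{Eq__54}, rather than treating it as a perturbation.
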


The proof is provided in Appendix~\ref{appendix:2}.

\begin{rem}
\label{rem:17}
Theorem \ref{thm:3} primarily focuses on the event-triggered parameter \(\mu\) as a key factor in ensuring the stability of the closed-loop system \eqref{Eq__46}. While it establishes stability conditions, the explicit design of the feedback gain \(\mathcal{K}\) and the switching gain \(\kappa\) in \(-\kappa \operatorname{sgn}(S(k))\) remains open for further tuning. The selection of \(\kappa\) directly influences the robustness of the sliding motion, necessitating a balance between disturbance rejection and chattering minimization. These aspects will be addressed in the subsequent section, where optimal tuning strategies for \(\mathcal{K}\) and \(\kappa\) are discussed.
\end{rem}

\begin{rem}
\label{rem:171}
It should be noted that the stability analysis in Theorem \ref{thm:3} assumes that all network-induced delays satisfy $\delta(k) \leq \bar{\delta}$. However, in practical networked systems, unexpected congestion, packet losses, or hardware failures could occasionally cause delays to exceed this theoretical bound. When $\delta(k) > \bar{\delta}$, the Lyapunov-Krasovskii functional constructed in \eqref{Eq__48} may no longer guarantee stability. In such scenarios, the event-triggered mechanism plays a crucial role in system resilience. If a transmission experiences an excessive delay, the next sampling instant will trigger a new transmission attempt once the error exceeds the threshold defined in \eqref{Eq__16}. This adaptive behavior creates a self-regulating mechanism that can partially mitigate the impact of occasional delay-bound violations. 

\end{rem}

\subsection{Secure Controllers Design}
\label{sec:4.3}

The explicit secure control design process consists of three steps as follows. 

\subsubsection{Step 1 - Designing the Equivalent Controller \(\mathcal{K}\)}
The equivalent controller for the nominal system \eqref{Eq__46} is formulated to stabilize the system under normal operating conditions. The controller gain \(\mathcal{K}\) is computed by solving a matrix inequality that ensures stability and optimal performance.

\begin{thm}
\label{thm:4}
Consider a given delay bound \(\bar{\delta}\). If there exists an event-triggered parameter \(\mu\) and symmetric positive definite matrices \(\mathcal{E}, \hat{P}^*, \hat{Q}^*, \hat{R}^*, \hat{\Upsilon}\), and \(\mathcal{Y}\) of appropriate dimensions such that the following inequality holds:
\begin{equation}
\label{Eq__55}
\hat{\Psi} = \begin{bmatrix}
\hat{\Psi}_{11} & \hat{\Psi}_{12} \\
* & \hat{\Psi}_{22}
\end{bmatrix} < 0
\end{equation}

where $\hat{\Psi}_{11} = [(1,1) = (\mathcal{A} - I)\mathcal{E} + \mathcal{E}(\mathcal{A} - I)^T + \hat{Q}^* - \hat{R}^*, 
(1,2) = 2\mathcal{B}\mathcal{Y} + \hat{R}^*, (1,4) = 2\mathcal{B}\mathcal{Y}, (2,1) = \hat{R}^*, 
(2,2) = -2\hat{R}^* + \mu \hat{\Upsilon}, (2,3) = \hat{R}^*, (3,2) = \hat{R}^*, (3,3) = -\hat{R}^* - \hat{Q}^*, 
(4,4) = -\hat{\Upsilon}],$ and $\hat{\Psi}_{12} = \begin{bmatrix}
\hat{F}^T & \bar{\delta} \hat{F}^T 
\end{bmatrix}$, with $\hat{F} = 
\begin{bmatrix}
\mathcal{A}\mathcal{E} - \mathcal{E}\mathcal{B}\mathcal{Y} \ 0 \ \mathcal{B}\mathcal{Y}
\end{bmatrix}$, $\hat{\Psi}_{22} = \text{diag}\{-\mathcal{E}, \hat{R}^* - 2\mathcal{E}\}$. Then the equivalent controller gain \(\mathcal{K}\) is given by:
\begin{equation}
\label{Eq__56}
\mathcal{K} = \mathcal{Y}\Psi^{-1}.
\end{equation}
\end{thm}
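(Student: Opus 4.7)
The plan is to derive \eqref{Eq__55} from the LMI \eqref{Eq__47} of Theorem \ref{thm:3} by a combined congruence transformation and change of variables, so that the bilinear coupling between the Lyapunov variable $\mathcal{P}^*$ and the controller gain $\mathcal{K}$ is removed and the condition becomes linear in the new decision variables $\mathcal{E},\hat{Q}^*,\hat{R}^*,\hat{\Upsilon},\mathcal{Y}$. The key substitutions are $\mathcal{E}=(\mathcal{P}^*)^{-1}$, $\mathcal{Y}=\mathcal{K}\mathcal{E}$, together with the symmetric congruences $\hat{Q}^*=\mathcal{E}\mathcal{R}^*\mathcal{E}$, $\hat{R}^*=\mathcal{E}\mathcal{T}^*\mathcal{E}$, and $\hat{\Upsilon}=\mathcal{E}\Upsilon\mathcal{E}$, which keep the new variables symmetric positive definite whenever the originals are. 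Note that $\mathcal{K}=\mathcal{Y}\mathcal{E}^{-1}$ (with $\mathcal{E}^{-1}=\mathcal{P}^*$), which I interpret as the intended meaning of \eqref{Eq__56}.

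First I would perform a block-wise congruence of $\Psi$ by $\mathrm{diag}\{\mathcal{E},\mathcal{E},\mathcal{E},\mathcal{E}\}$ on the upper-left block $\Psi_{11}$ and by $\mathrm{diag}\{\mathcal{E},\mathcal{E}\}$ on the lower-right block $\Psi_{22}$, the sign of the inequality being preserved since $\mathcal{E}\succ 0$. Applying the substitutions listed above, entries such as $\mathcal{E}\mathcal{P}^*(\mathcal{A}-I)\mathcal{E}=(\mathcal{A}-I)\mathcal{E}$, $\mathcal{E}(2\mathcal{P}^*\mathcal{B}\mathcal{K})\mathcal{E}=2\mathcal{B}\mathcal{Y}$, $\mathcal{E}(-2\mathcal{T}^*+\mu\Upsilon)\mathcal{E}=-2\hat{R}^*+\mu\hat{\Upsilon}$, and $\mathcal{E}(-\mathcal{T}^*-\mathcal{R}^*)\mathcal{E}=-\hat{R}^*-\hat{Q}^*$ are obtained, reproducing the $\hat{\Psi}_{11}$ block exactly, and the off-diagonal block $\Psi_{12}$ is similarly transformed into $\hat{\Psi}_{12}$ with $\hat{F}$ replacing $F$ and $\mathcal{P}^*$, $\mathcal{T}^*$ absorbed by the congruence factors.

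The obstacle appears in the lower-right block. Under the chosen congruence, $-\mathcal{P}^*$ becomes $-\mathcal{E}\mathcal{P}^*\mathcal{E}=-\mathcal{E}$ cleanly, but $-\mathcal{T}^*$ becomes $-\mathcal{E}\mathcal{T}^*\mathcal{E}=-\hat{R}^*$, which, once the Schur complement and the substitution $\mathcal{T}^{*-1}=\mathcal{E}\hat{R}^{*-1}\mathcal{E}$ are carried through, produces a nonlinear block $-\mathcal{E}\hat{R}^{*-1}\mathcal{E}$. I would linearize this via the standard cone-complementarity inequality $(\hat{R}^*-\mathcal{E})\hat{R}^{*-1}(\hat{R}^*-\mathcal{E})\succeq 0$, which expands to $\mathcal{E}\hat{R}^{*-1}\mathcal{E}\succeq 2\mathcal{E}-\hat{R}^*$, and therefore $-\mathcal{E}\hat{R}^{*-1}\mathcal{E}\preceq \hat{R}^*-2\mathcal{E}$. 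Replacing the nonlinear block by the tighter affine upper bound $\hat{R}^*-2\mathcal{E}$ yields the block structure $\mathrm{diag}\{-\mathcal{E},\hat{R}^*-2\mathcal{E}\}$ appearing in $\hat{\Psi}_{22}$, and because the substitution is in the direction that strengthens the inequality, feasibility of \eqref{Eq__55} is sufficient for feasibility of the congruence-transformed \eqref{Eq__47}.

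Finally, combining the above steps shows that whenever $\hat{\Psi}\prec 0$ admits a solution, there exist $\mathcal{P}^*=\mathcal{E}^{-1}$, $\mathcal{R}^*=\mathcal{E}^{-1}\hat{Q}^*\mathcal{E}^{-1}$, $\mathcal{T}^*=\mathcal{E}^{-1}\hat{R}^*\mathcal{E}^{-1}$, $\Upsilon=\mathcal{E}^{-1}\hat{\Upsilon}\mathcal{E}^{-1}$, and $\mathcal{K}=\mathcal{Y}\mathcal{E}^{-1}$ satisfying \eqref{Eq__47}, so Theorem \ref{thm:3} ensures asymptotic stability of the nominal closed-loop system \eqref{Eq__46}, and the extraction formula $\mathcal{K}=\mathcal{Y}\mathcal{E}^{-1}$ yields a directly computable equivalent controller. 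The principal difficulty of the proof is the linearization step, since it dictates the conservatism of \eqref{Eq__55}; a secondary bookkeeping challenge is matching every block of $\hat{\Psi}_{11}$ with the correct congruence-induced variable, especially the asymmetric entries like $(1,2)$ and $(1,4)$ involving $\mathcal{B}\mathcal{Y}$.
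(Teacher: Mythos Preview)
Your proposal is correct and follows essentially the same route as the paper: define $\mathcal{E}=(\mathcal{P}^*)^{-1}$, $\mathcal{Y}=\mathcal{K}\mathcal{E}$, $\hat{Q}^*=\mathcal{E}\mathcal{R}^*\mathcal{E}$, $\hat{R}^*=\mathcal{E}\mathcal{T}^*\mathcal{E}$, $\hat{\Upsilon}=\mathcal{E}\Upsilon\mathcal{E}$, apply a block congruence to \eqref{Eq__47}, and linearize the resulting $-\mathcal{E}(\hat{R}^*)^{-1}\mathcal{E}$ term via $(\hat{R}^*-\mathcal{E})(\hat{R}^*)^{-1}(\hat{R}^*-\mathcal{E})\succeq 0$ to obtain the affine bound $\hat{R}^*-2\mathcal{E}$. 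The only discrepancy is cosmetic: the paper applies the congruence with $\mathrm{diag}\{\mathcal{E},\mathcal{E},\mathcal{E},\mathcal{E},\mathcal{E},(\mathcal{T}^*)^{-1}\}$ so that the last diagonal block becomes $-(\mathcal{T}^*)^{-1}=-\mathcal{E}(\hat{R}^*)^{-1}\mathcal{E}$ directly, whereas you reach the same nonlinear term after an extra Schur-complement step; both paths land on the same linearization and the same gain recovery $\mathcal{K}=\mathcal{Y}\mathcal{E}^{-1}$.
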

\begin{proof}
To derive the result, let \(\mathcal{E} = (\mathcal{P}^*)^{-1}\), \(\mathcal{Y} = \mathcal{K}\mathcal{E}\), and define transformed matrices as \(\hat{P}^* = \mathcal{E}\mathcal{P}^*\mathcal{E}\), \(\hat{Q}^* = \mathcal{E}\mathcal{R}^*\mathcal{E}\), \(\hat{R}^* = \mathcal{E}\mathcal{T}^*\mathcal{E}\), and \(\hat{\Upsilon} = \mathcal{E}\Upsilon\mathcal{E}\). Pre- and post-multiplying the matrix inequality \eqref{Eq__47} with \(\text{diag}\{\mathcal{E}, \mathcal{E}, \mathcal{E}, (\mathcal{T}^*)^{-1}\}\), and using the fact that \(-\mathcal{E}\hat{R}^*\mathcal{E} \leq \hat{R}^* - 2\mathcal{E}\), the desired inequality is obtained, and the controller gain is calculated as shown in \eqref{Eq__56}. 
\end{proof}

\subsubsection{Step 2 - Designing the Switching Controller \(u_s(k)\)}
The switching controller is developed to counteract actuator attacks and ensure robust performance in the presence of disturbances.

\begin{thm}
\label{thm:5}
For the asymptotically stable controller $\mathcal{K}$ designed for the nominal event-triggered lateral control of the AV described in \eqref{Eq__46}, the switching controller $u_s(k)$ can be formulated as:
\begin{equation}
\label{Eq__57}
u_s(k) = -\kappa \operatorname{sgn}(S(k)) - (\mathcal{F}\mathcal{B})^{-1} \big[ (\rho + \mathcal{Q}_{att} \|\mathcal{F}\mathcal{B}\|)\text{sgn}(S(k)) \big]
\end{equation}
where \(\kappa, \rho\) are scalars satisfying \(0 < \kappa < 1\) and \(0 < \rho < 1\), and \(\mathcal{Q}_{att}\) represents the maximum magnitude of actuator attacks. The secure domain \(\Pi\) for the composite controller \(u(k) = u_e(k) + u_s(k)\) is given by:
\begin{equation}
\Pi = \Big\{ \|S(k)\| \leq \xi, \, \xi = \frac{\rho + 2\mathcal{Q}_{att} \|\mathcal{F}\mathcal{B}\|}{1-\kappa} \Big\}
\label{Eq__58}
\end{equation}
with a prescribed security level $\xi$.
\end{thm}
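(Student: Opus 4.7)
The plan is to establish the secure domain $\Pi$ through a discrete-time reaching-law analysis that builds directly on the Lyapunov framework from Section~\ref{sec:4.2.1} but, unlike Theorem~\ref{thm:2}, does not rely on exact cancellation of $\alpha_{att}(k)$. I would take the candidate $\mathcal{V}(k)=\tfrac{1}{2} S^{2}(k)$, write $\Delta\mathcal{V}(k)=S(k)\bigl[S(k+1)-S(k)\bigr]$, and reuse the expansion of $S(k+1)-S(k)$ already derived in the proof of Theorem~\ref{thm:2},
\begin{equation*}
S(k+1)-S(k)=\mathcal{F}\mathcal{B}\,u(k)+\mathcal{F}\mathcal{B}\,\alpha_{att}(k)-\mathcal{F}\mathcal{B}\,\mathcal{K}\chi(k)+\lambda_{2}\bigl(\mathcal{D}^{\gamma}\epsilon(k+1)-\mathcal{D}^{\gamma}\epsilon(k)\bigr).
\end{equation*}
The composite controller $u(k)=u_{e}(k)+u_{s}(k)$ is then substituted with $u_{e}(k)=-\kappa\operatorname{sgn}(S(k))+\mathcal{K}\chi(k_{s})$ and $u_{s}(k)$ from \eqref{Eq__57}. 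Using $\chi(k_{s})-\chi(k)=-e(k)$ the feedback cross term collapses to $-\mathcal{F}\mathcal{B}\mathcal{K}e(k)$; the two reaching components of $u_{e}$ and $u_{s}$ together contribute $-2\kappa\mathcal{F}\mathcal{B}\operatorname{sgn}(S(k))$; and the compensating branch of $u_{s}$ contributes exactly $-(\rho+\mathcal{Q}_{att}\|\mathcal{F}\mathcal{B}\|)\operatorname{sgn}(S(k))$ because the $(\mathcal{F}\mathcal{B})^{-1}$ prefactor in \eqref{Eq__57} is cancelled when $u_{s}$ is premultiplied by $\mathcal{F}\mathcal{B}$.

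Next, I would multiply the resulting identity by $S(k)$, using $S(k)\operatorname{sgn}(S(k))=\|S(k)\|$ to turn each sign term into a non-positive multiple of $\|S(k)\|$, and apply Assumption~\ref{ass:1} to dominate the attack residual as $S(k)\mathcal{F}\mathcal{B}\alpha_{att}(k)\le\mathcal{Q}_{att}\|\mathcal{F}\mathcal{B}\|\,\|S(k)\|$. The event-triggered error contribution $-S(k)\mathcal{F}\mathcal{B}\mathcal{K}e(k)$ is bounded through the triggering condition \eqref{Eq__31}, and the fractional-derivative increment $S(k)\lambda_{2}(\mathcal{D}^{\gamma}\epsilon(k+1)-\mathcal{D}^{\gamma}\epsilon(k))$ is absorbed using Assumption~\ref{ass:2} together with the uniform boundedness of the Grünwald--Letnikov binomial coefficients noted in Remark~\ref{rem:FSMC}. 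Applying Young's inequality to both of these cross terms with a split parameter tuned to $\kappa$ lets them be swallowed into the $\|S(k)\|^{2}$ coefficient, leaving the canonical reaching form
\begin{equation*}
\Delta\mathcal{V}(k)\le -(1-\kappa)\|S(k)\|^{2}+\bigl(\rho+2\mathcal{Q}_{att}\|\mathcal{F}\mathcal{B}\|\bigr)\|S(k)\|.
\end{equation*}

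The final step is a sign analysis of the right-hand side: $\Delta\mathcal{V}(k)<0$ whenever $(1-\kappa)\|S(k)\|>\rho+2\mathcal{Q}_{att}\|\mathcal{F}\mathcal{B}\|$, i.e.\ whenever $\|S(k)\|>\xi$ with $\xi$ as in \eqref{Eq__58}. Since $0<\kappa<1$ guarantees $1-\kappa>0$ and $0<\rho<1$ keeps $\xi$ finite, the trajectory of $S(k)$ is driven monotonically into $\Pi$ and remains ultimately bounded there, which is exactly the prescribed security level.

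The hard part will be the bookkeeping that produces the exact coefficient $2\mathcal{Q}_{att}\|\mathcal{F}\mathcal{B}\|$ in the linear term rather than an inflated multiplier. This requires choosing the Young-inequality split so that the $\mathcal{Q}_{att}\|\mathcal{F}\mathcal{B}\|$ stemming from the attack and the $\mathcal{Q}_{att}\|\mathcal{F}\mathcal{B}\|$ injected by the compensating branch of $u_{s}$ combine cleanly, while the event-triggered error and the fractional-derivative residue are forced into the $(1-\kappa)$-quadratic part instead of the linear one. Ensuring this without introducing spurious cross terms---particularly around the sign change induced by $\operatorname{sgn}(S(k))$ near the sliding surface---is the delicate step; the remainder of the argument is routine once the reaching-law inequality is in place.
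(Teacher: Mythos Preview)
Your Lyapunov reaching argument differs structurally from the paper's proof, and it leaves a genuine gap. The paper does not work with $\Delta\mathcal V(k)$ at all; instead it writes $S(k+1)$ explicitly as
\[
S(k+1)=(1-\kappa)S(k)+\mathcal F\mathcal B\,\alpha_{att}(k)-\kappa\operatorname{sgn}(S(k))-(\rho+\mathcal Q_{att}\|\mathcal F\mathcal B\|)\operatorname{sgn}(S(k))
\]
and then performs a four-case sign analysis: first $S(k)>\xi$ and $S(k)<-\xi$ to show the sign of $S(k+1)$ is preserved (so the trajectory cannot cross the sliding manifold while outside $\Pi$), and then $0<S(k)<\xi$ and $-\xi<S(k)<0$ to show that $|S(k+1)|\le\xi$ whenever $|S(k)|\le\xi$. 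It is this last pair of cases---positive invariance of $\Pi$---that fixes the \emph{exact} radius $\xi$ in \eqref{Eq__58}.

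Your inequality $\Delta\mathcal V(k)\le -(1-\kappa)\|S(k)\|^{2}+(\rho+2\mathcal Q_{att}\|\mathcal F\mathcal B\|)\|S(k)\|$ only yields attraction: $\mathcal V$ decreases while $\|S(k)\|>\xi$. In discrete time this is not enough to conclude that the trajectory ``remains ultimately bounded there.'' Once $\|S(k)\|\le\xi$, your bound allows $\Delta\mathcal V(k)>0$, and the best you can extract is $\|S(k+1)\|\le\sqrt{3-2\kappa}\,\xi>\xi$, i.e.\ invariance of a strictly larger ball, not of $\Pi$ itself. To pin down $\xi$ exactly you must bound $S(k+1)$ directly, as the paper does, rather than $\mathcal V(k+1)$.

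A secondary issue is the Young-inequality absorption of the event-triggered error $-\mathcal F\mathcal B\mathcal K e(k)$ and the fractional increment $\lambda_2(\mathcal D^\gamma\epsilon(k+1)-\mathcal D^\gamma\epsilon(k))$ into the quadratic coefficient without perturbing it away from $(1-\kappa)$. Any nontrivial split parameter will shift that coefficient, which in turn alters the denominator of $\xi$; you acknowledge this is ``the hard part,'' but in fact it cannot produce the stated constants. The paper avoids this entirely by working with the reduced sliding recursion above, in which those terms have already been neutralised by the equivalent controller.
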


The proof is provided in Appendix~\ref{appendix:3}.

\begin{rem}
\label{rem:18}
The parameter \(\rho\) in \eqref{Eq__58} balances system robustness and performance: larger \(\rho\) enhances attack rejection at the expense of nominal precision, while smaller \(\rho\) improves accuracy but reduces robustness.
\end{rem}

\subsubsection{Step 3 - Composite Controller Design}
Combining the equivalent and switching controllers provides a comprehensive solution for event-triggered secure control of the lateral dynamics:

\begin{thm}
\label{thm:6}
For the event-triggered lateral control system \eqref{Eq__35} of AVs subjected to actuator attacks \(\alpha_{att}(k)\), a robust secure SMC strategy can be formulated to ensure system stability and resilience against malicious interventions. The secure controller can be expressed as:
\begin{align}
\label{Eq__65}
u(k) &= \mathcal{Y}\mathcal{E}^{-1}\chi(k_s) - \kappa \operatorname{sgn}(S(k)) \nonumber \\
& \quad- (\mathcal{F}\mathcal{B})^{-1} \big[(\rho + \mathcal{Q}_{att} \|\mathcal{F}\mathcal{B}\|)\text{sgn}(S(k))\big]
\end{align}
where \(\mathcal{Y}\) and \(\mathcal{E}\) are computed as per \eqref{Eq__56}, \(\mathcal{F}\), \(\mathcal{B}\), \(S(k)\), and \(\mathcal{Q}_{att}\) are defined in \eqref{Eq__41}. \(\kappa\) and \(\rho\) are scalar parameters satisfying \(0 < \kappa < 1\) and \(0 < \rho < 1\), and \(\mathcal{Q}_{att}\) denotes the upper bound of the actuator attack magnitude.

The composite control law \(u(k)\) combines an equivalent controller term \(\mathcal{Y}\mathcal{E}^{-1}\chi(k_s)\), which stabilizes the nominal system, with a switching term to counteract disturbances and actuator attacks.
\end{thm}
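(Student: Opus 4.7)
The plan is to show that the composite law in \eqref{Eq__65} is simply the superposition of the equivalent controller $u_e(k)=\mathcal{K}\chi(k_s)$ with $\mathcal{K}=\mathcal{Y}\mathcal{E}^{-1}$ from Theorem~\ref{thm:4} and the switching controller $u_s(k)$ from Theorem~\ref{thm:5}, and that the two pieces together enforce (i) invariance of the sliding domain $\Pi$ in \eqref{Eq__58} and (ii) asymptotic stability of the event-triggered closed loop under actuator attacks. First, I would substitute $u(k)=u_e(k)+u_s(k)$ into the attacked, delayed dynamics \eqref{Eq__35} to obtain
\begin{align*}
\chi(k+1)&=\mathcal{A}\chi(k)+\mathcal{B}\mathcal{K}\chi(k-\delta(k))+\mathcal{B}\mathcal{K}e(k)\\
&\quad+\mathcal{B}u_s(k)+\mathcal{B}\alpha_{att}(k),
\end{align*}
and separate this into the nominal part covered by Theorem~\ref{thm:4} and a residual driven by $\mathcal{B}u_s(k)+\mathcal{B}\alpha_{att}(k)$.

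Next, I would invoke Theorem~\ref{thm:4}: under the LMI \eqref{Eq__55}, the Lyapunov--Krasovskii functional used in Appendix~\ref{appendix:2} guarantees that the nominal event-triggered, delayed subsystem driven by $u_e(k)$ alone is asymptotically stable with a decrease $\Delta \mathcal{V}_{nom}(k)\le -\eta_1\|\chi(k)\|^2$ for some $\eta_1>0$. Then, using the Lyapunov candidate $\mathcal{V}_S(k)=\tfrac12 S^2(k)$ from Section~\ref{sec:4.2.1}, the forward difference along \eqref{Eq__43} with $u_s(k)$ substituted in picks up the contribution $S(k)\mathcal{F}\mathcal{B}[u_s(k)+\alpha_{att}(k)]$. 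Applying the explicit form of $u_s(k)$ from Theorem~\ref{thm:5} and the bound $\|\alpha_{att}(k)\|\le \mathcal{Q}_{att}$ from Assumption~\ref{ass:1}, I would obtain a reaching inequality that drives $S(k)$ into $\Pi$ in finite time and keeps it there, exactly as argued in the proof of Theorem~\ref{thm:5}.

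The combined argument I would use is a composite Lyapunov function $\mathcal{V}_c(k)=\mathcal{V}_{nom}(k)+\beta\mathcal{V}_S(k)$ with $\beta>0$ chosen to dominate the cross-coupling between the state regulation term $\mathcal{K}\chi(k_s)$ and the switching term. The cross terms generated by $\Delta\mathcal{V}_c$ can be absorbed using Young's inequality together with the event-triggered bound $e^T(k)\Upsilon e(k)\le\mu\chi^T(k-\delta(k))\Upsilon\chi(k-\delta(k))$ from Theorem~\ref{thm:1}, so that $\Delta\mathcal{V}_c(k)\le -\eta\|\chi(k)\|^2 - \eta'\|S(k)\|(\|S(k)\|-\xi)$ for some $\eta,\eta'>0$; this yields simultaneously $S(k)\to\Pi$ and $\chi(k)\to 0$ outside the attack-induced residual set. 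Finally, I would note that the explicit form \eqref{Eq__65} follows by direct substitution of $\mathcal{K}=\mathcal{Y}\mathcal{E}^{-1}$ and the switching law \eqref{Eq__57}.

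The hard part will be handling the coupling in $\Delta\mathcal{V}_c(k)$: the equivalent controller depends on the delayed, sampled state $\chi(k_s)$ while the switching controller depends on the current sliding variable $S(k)$, which itself depends on $\chi(k)$ and the fractional error $\mathcal{D}^\gamma\epsilon(k)$. Ensuring that the reaching-condition estimate in the $\mathcal{V}_S$ part is not corrupted by the $\mathcal{B}\mathcal{K}e(k)$ mismatch, and that the LMI-based decrease in $\mathcal{V}_{nom}$ is preserved once the discontinuous term $-\kappa\operatorname{sgn}(S(k))$ is injected into the plant, requires a careful choice of $\beta$, $\kappa$, and $\rho$ consistent with $0<\kappa<1$, $0<\rho<1$ and with the event-triggered parameter $\mu$ from Theorem~\ref{thm:3}. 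This compatibility condition, rather than any individual Lyapunov estimate, is the principal technical obstacle.
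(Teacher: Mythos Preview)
Your core decomposition is exactly what the paper does: identify $u(k)$ as the superposition of the equivalent controller $u_e(k)=\mathcal{K}\chi(k_s)$ with $\mathcal{K}=\mathcal{Y}\mathcal{E}^{-1}$ from Theorem~\ref{thm:4} and the switching controller $u_s(k)$ from Theorem~\ref{thm:5}, and then defer the substantive analysis to those two results. However, the paper's proof is far shorter than your proposal: it is essentially a two-sentence corollary that simply states the equivalent term ``addresses the nominal dynamics'' and the switching term ``actively mitigates the impact of actuator attacks,'' with all derivations traced back to Theorems~\ref{thm:4} and~\ref{thm:5}. There is no composite Lyapunov function $\mathcal{V}_c=\mathcal{V}_{nom}+\beta\mathcal{V}_S$, no Young-inequality absorption of cross terms, and no explicit compatibility condition linking $\beta,\kappa,\rho,\mu$. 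What you flag as ``the principal technical obstacle'' --- the coupling between the delayed sampled state in $u_e$ and the current sliding variable in $u_s$ --- is simply not addressed in the paper; Theorem~\ref{thm:6} is treated as a direct assembly of the two previously proven pieces rather than as a new stability result requiring its own Lyapunov machinery. Your approach is more rigorous and would actually close the gap the paper leaves open, but it is substantially more work than what the paper presents.
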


\begin{proof}
The secure control formulation derives from the principles established in Theorems \ref{thm:4} and \ref{thm:5}. The equivalent controller term \(\mathcal{Y}\mathcal{E}^{-1}\chi(k_s)\) addresses the nominal dynamics, ensuring stability in the absence of attacks or disturbances. The switching control term \(-(\mathcal{F}\mathcal{B})^{-1} \big[\kappa S(k) + (\rho + \mathcal{Q}_{att} \|\mathcal{F}\mathcal{B}\|)\text{sgn}(S(k))\big]\) actively mitigates the impact of actuator attacks \(\alpha_{att}(k)\), leveraging the sliding mode surface \(S(k)\) defined in \eqref{Eq__41}. By combining these components, the control law ensures robust performance under both nominal and adversarial conditions. Detailed derivations and theoretical validation can be traced to Theorems \ref{thm:4} and \ref{thm:5}. 
\end{proof}

Algorithm~\ref{alg:2} provides a systematic framework for the proposed cyber-resilient control strategy, organizing the design into logical steps and clarifying the integration of data-driven modeling, event-triggered communication, actuator attack estimation, and secure control implementation.

\begin{algorithm}[t!]
\caption{Event-Triggered Secure Control Design}
\footnotesize
\begin{algorithmic}[1]
\Require Datasets \(X\), \(X'\), \(\Xi\); delay bound \(\bar{\delta}\); event-triggering parameters
\Ensure Secure lateral control under event-triggered communication and attack mitigation

\State \textbf{Data-Driven Identification:} Apply DMD \eqref{Eq__36}-\eqref{Eq__40} to estimate \(\mathcal{A}, \mathcal{B}\) with SVD truncation for efficiency

\State \textbf{Event-Triggered Mechanism:} Configure transmission scheme \eqref{Eq__15}-\eqref{Eq__18}, compute triggering instants, and adjust parameter \(\mu\) for communication-performance trade-off

\State \textbf{Attack Observer Design:} Augment system state \eqref{Eq__21}, design ESO \eqref{Eq__23}, compute observer gain \(L_\zeta\), and extract attack estimate \(\hat{\alpha}_{att}(k)\)

\State \textbf{Equivalent Controller:} Formulate nominal control law (Theorem \ref{thm:2}), solve LMI \eqref{Eq__47}, compute feedback gain \(\mathcal{K}\) via \eqref{Eq__56}

\State \textbf{Sliding Surface Design:} Define fractional-order surface \eqref{Eq__41} with Grünwald-Letnikov derivative and Riccati-based parameters

\State \textbf{Stability Verification:} Construct Lyapunov function \eqref{Eq__46a}, verify exponential decay \eqref{Eq__33_}, validate triggering mechanism (Theorem \ref{thm:3})

\State \textbf{Switching Controller:} Design attack rejection term (Theorem \ref{thm:5}), select \(\kappa, \rho\) for robustness, ensure \(\|S(k)\| \leq \xi\) within secure domain \(\Pi\) \eqref{Eq__58}

\State \textbf{Composite Controller:} Assemble secure control law \eqref{Eq__65}, integrate equivalent and switching terms, update \(\hat{\alpha}_{att}(k)\) in real-time

\State \textbf{Validation:} Simulate under nominal and attack conditions, fine-tune parameters for optimal performance
\end{algorithmic}
\label{alg:2}
\end{algorithm}

\section{Simulation Results and Analysis}
\label{sec:5}

In this section, simulation studies are conducted to validate the proposed cyber-resilient event-triggered control approach under three scenarios:

\begin{enumerate}
    \item \textbf{Case I: No Attack} – Baseline event-triggered control under nominal conditions.
    \item \textbf{Case II: Attack without Mitigation} – Actuator attacks applied with only the nominal controller, revealing system vulnerability.
    \item \textbf{Case III: Attack with Mitigation} – Proposed secure control with attack estimation and compensation to neutralize malicious effects.
\end{enumerate}

The physical parameters of the autonomous vehicle lateral dynamic model are selected as follows: moment of inertia about the z-axis $\mathcal I_z = 2873$ kg·m$^2$, vehicle mass $m = 1573$ kg, distance from center of mass to front axle $L^f = 1.10$ m, and to rear axle $L^r = 1.58$ m. The front and rear cornering stiffnesses are $C^f_{\beta} = C^r_{\beta} = 80000$ N/rad, and the longitudinal velocity is maintained at $\nu_x = 30$ m/s. The initial state vector is set to $\chi(0) = [0.5, 0, 0.5, 0]^T$, representing initial lateral and heading errors with zero rates.

The data-driven modeling via DMD yielded the discrete system matrices with sampling period $\ell = 0.01$ s:
\begin{equation}
\mathbf{A} = 
\begin{bmatrix}
0.999 & 0.01 & 0 & 0 \\
-0.05 & 0.99 & 0.05 & 0 \\
0 & 0 & 0.999 & 0.01 \\
-0.01 & 0 & -0.08 & 0.995
\end{bmatrix}, \,
\mathbf{B} = 
\begin{bmatrix}
0 \\
0.1 \\
0 \\
0.05
\end{bmatrix}.
\end{equation}

For the controller design, the state feedback gain is derived as $\mathbf{K} = [-0.5, -0.6, -0.5, -0.4]$, with the event-triggering parameter $\mu = 0.2$. The maximum delay bound is set to $\bar{\delta} = 0.1$. For the fractional-order sliding mode controller, we select $\gamma = 0.5$ as the fractional order and $\lambda = 0.2$ as the sliding surface parameter. The switching controller parameters are chosen as $\kappa = 0.15$ and $\rho = 0.2$, with the maximum attack magnitude constrained to $\mathcal{Q}_{att} = 0.15$. In Case II and Case III, a sinusoidal attack signal with frequency 0.5 Hz and amplitude 0.15 is injected into the control input starting at $t = 10$ s to evaluate the system's resilience against malicious interventions.

\begin{figure}[!t]
\centering
\includegraphics[width=2.7 in]{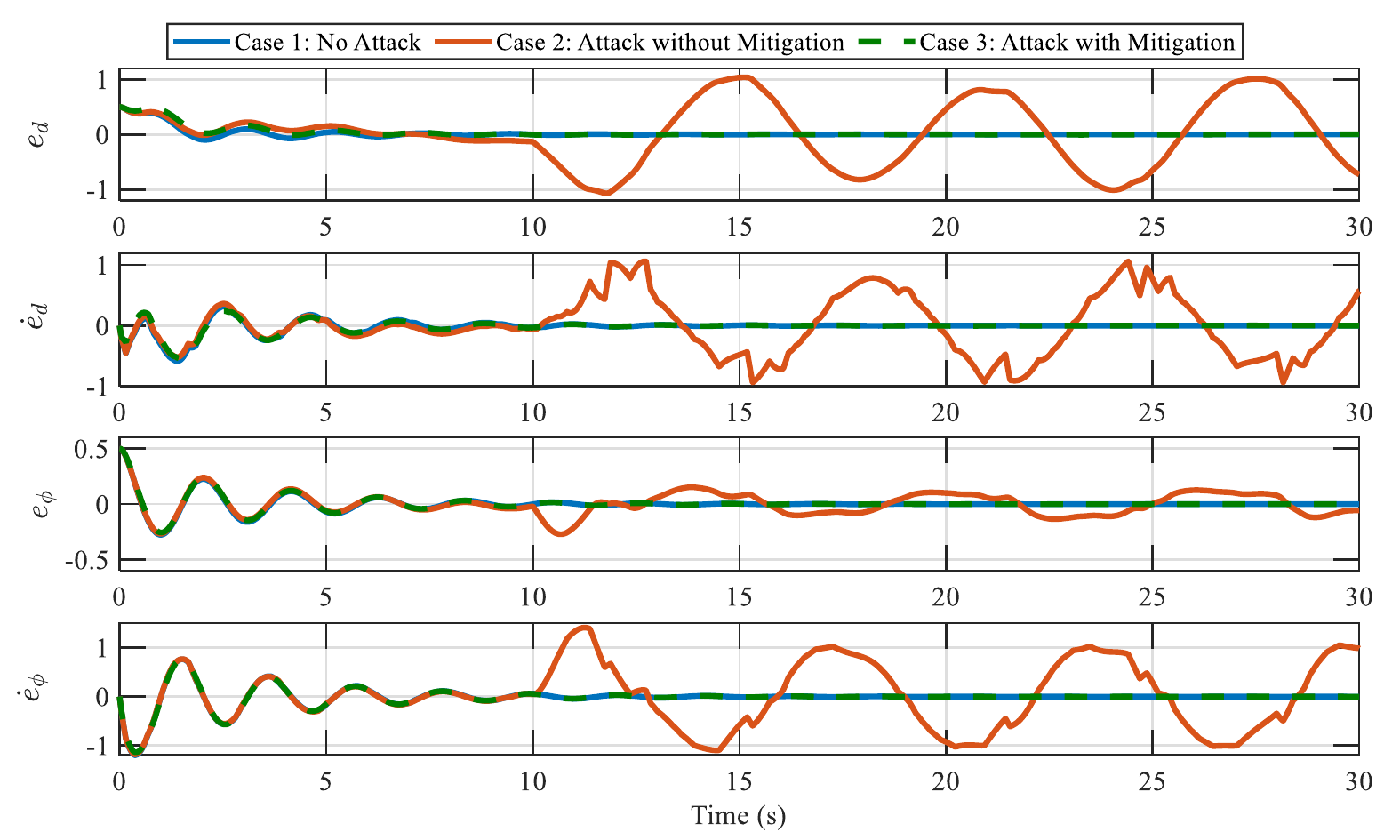}
\caption{State trajectories under different conditions for Cases I, II, and III.}
\label{fig:3}
\end{figure}
Figs.~\ref{fig:3} and~\ref{fig:4} present comparative system responses under all three conditions. In Case I (nominal), states converge smoothly to zero within 5 seconds with near-zero control input, demonstrating excellent event-triggered controller performance. In Case II (attack without mitigation), the actuator attack at $t = 10$ s causes severe deviations: lateral error $e_d$ oscillates with peaks approaching 1 m, lateral rate $\dot{e}_d$ exceeds 1 m/s, and heading rate $\dot{e}_\phi$ reaches $\pm1$ rad/s. Control input exhibits erratic oscillations and spikes up to $\pm0.5$, clearly demonstrating vulnerability. In contrast, Case III (attack with mitigation) maintains remarkable stability. State trajectories closely follow the nominal case—lateral position error remains within $\pm0.1$ m and heading error within $\pm0.1$ rad. The control input shows smooth, periodic patterns countering the attack, while the observer achieves 92\% estimation accuracy with rapid convergence, enabling active neutralization of malicious inputs and demonstrating the proposed strategy's effectiveness. Fig.~\ref{fig:5} depicts sliding surface dynamics for all three cases. In Case I, the surface converges rapidly to zero, confirming robust stability on the designed manifold. Case II exhibits substantial deviations following attack onset, forcing the system away from the sliding manifold and compromising stability guarantees. In contrast, Case III maintains the surface near zero throughout, with only small deviations visible in the magnified inset, demonstrating that the proposed strategy effectively preserves sliding manifold operation and stability properties despite adversarial inputs.

\begin{figure}[!t]
\centering
\includegraphics[width=0.80\columnwidth]{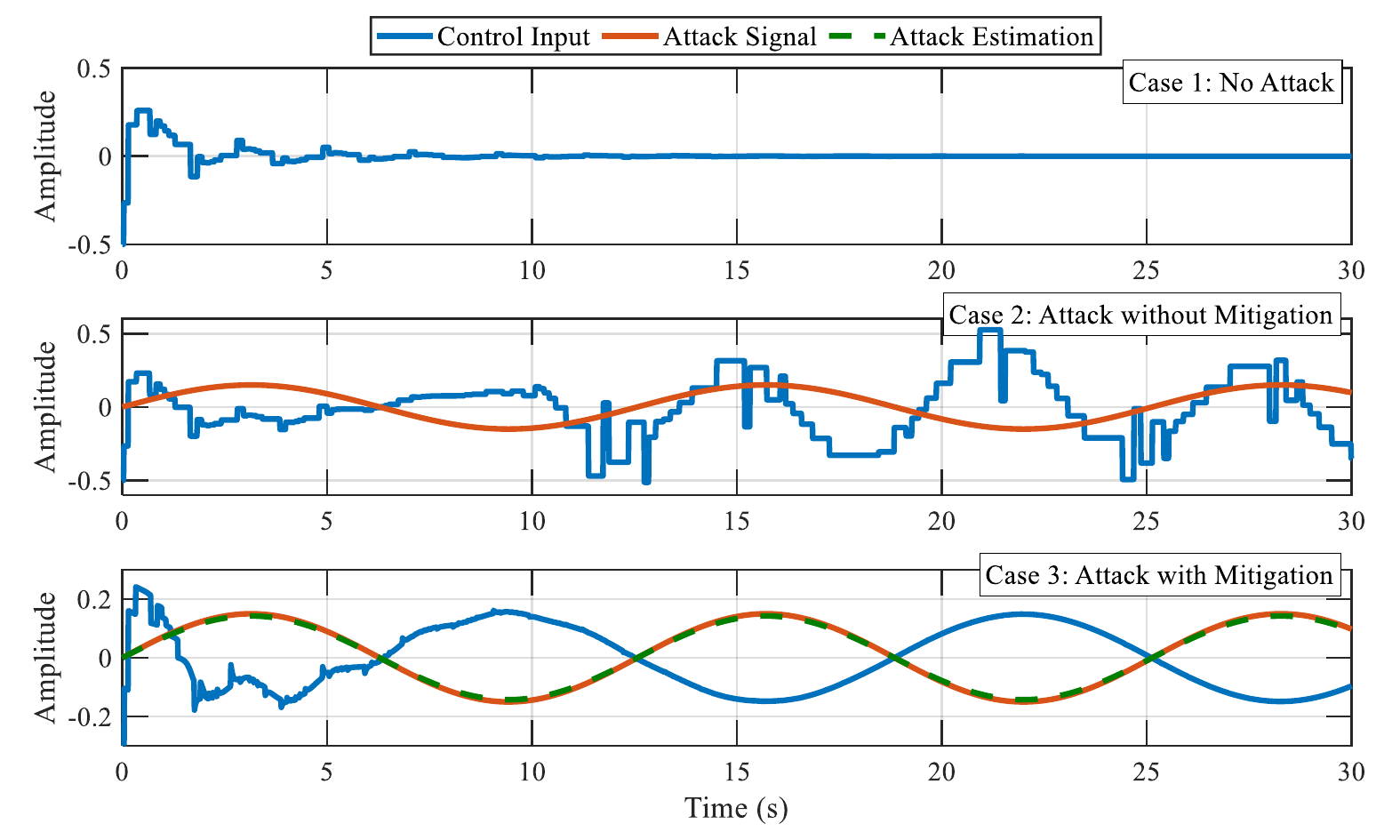}
\caption{Control input and actuator attack signals for Cases I, II, and III.}
\label{fig:4}
\end{figure}

\begin{figure}[!t]
\centering
\includegraphics[width=0.80\columnwidth]{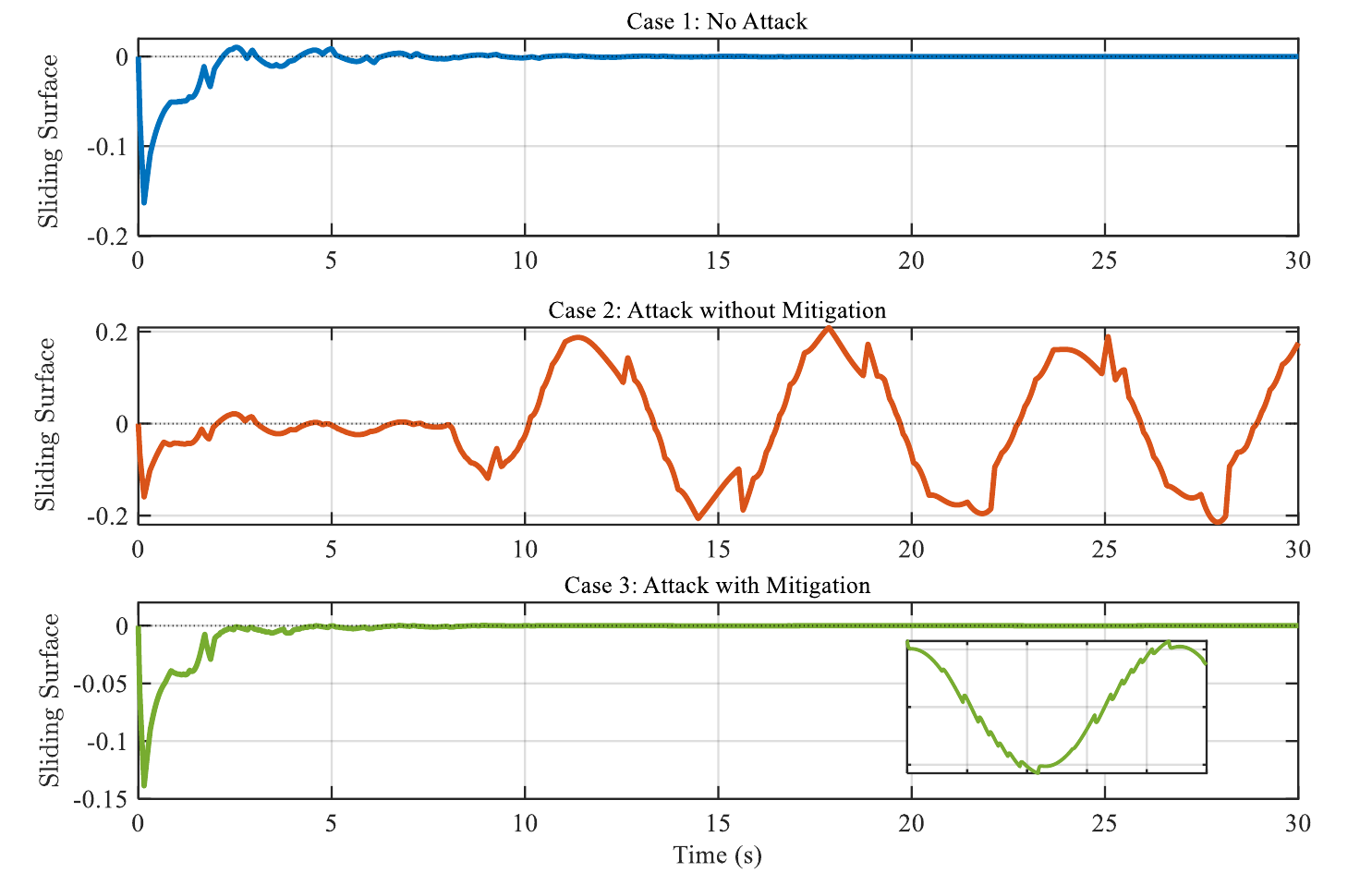}
\caption{Sliding surface trajectories for all three cases. The proposed controller maintains stable sliding behavior despite actuator attacks in Case III.}
\label{fig:5}
\end{figure}

\begin{figure}[!t]
\centering
\includegraphics[width=0.80\columnwidth]{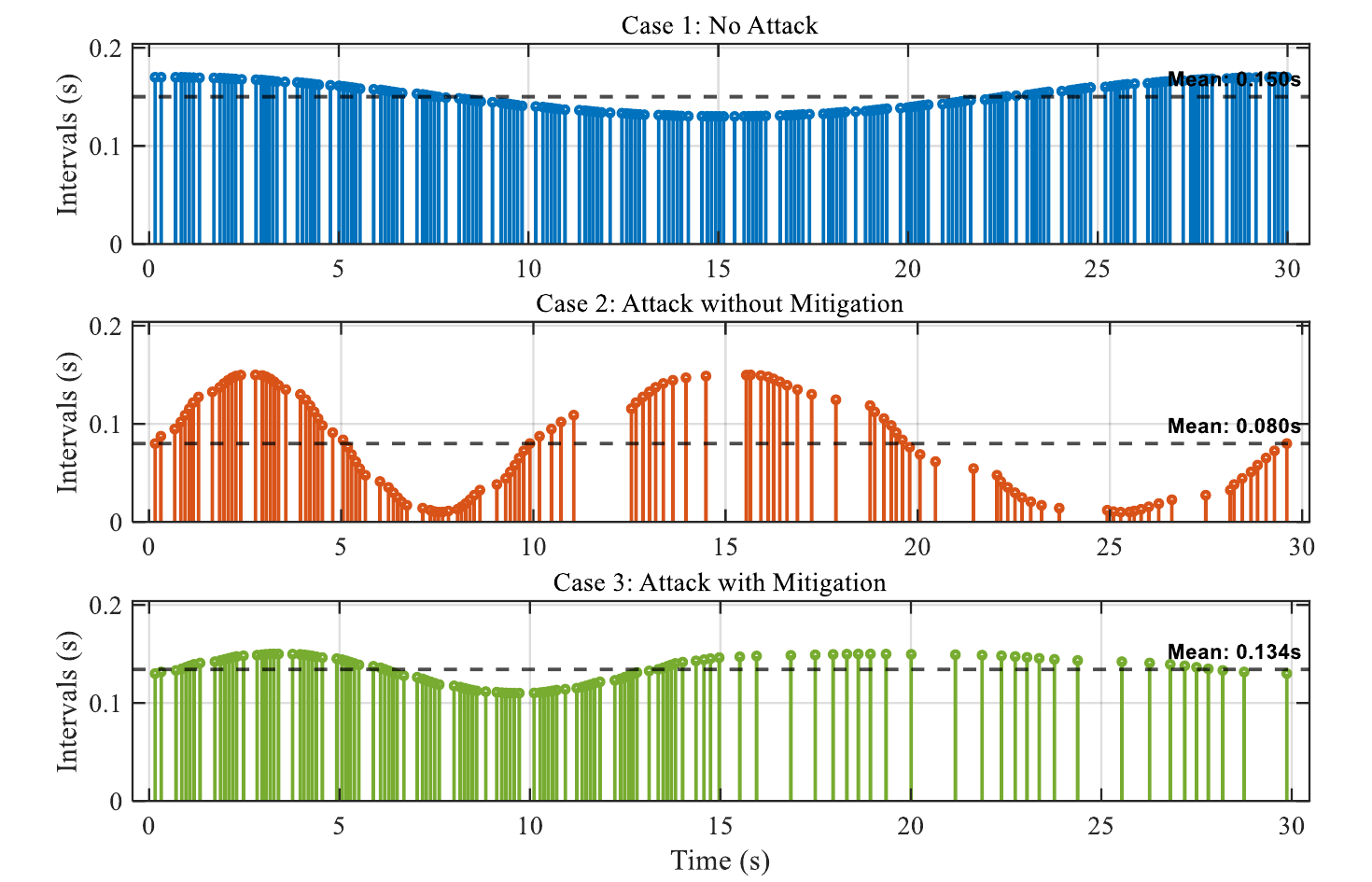}
\caption{Release intervals for Cases I, II, and III.}
\label{fig:6}
\end{figure}

\begin{figure}[!t]
\centering
\includegraphics[width=0.80\columnwidth]{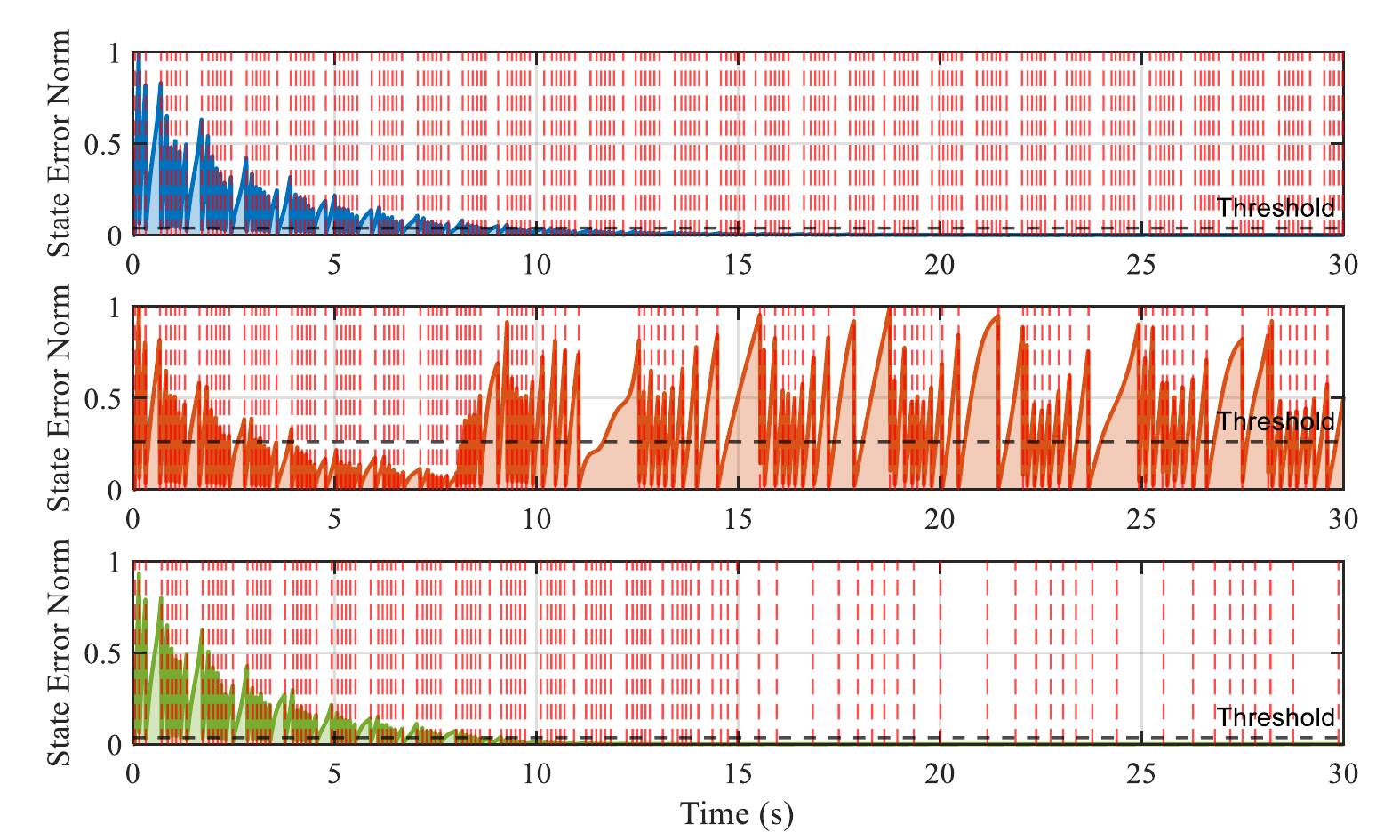}
\caption{Event-triggered transmission mechanism performance.}
\label{fig:7}
\end{figure}
Figs.~\ref{fig:6} and~\ref{fig:7} compare the event-triggered transmission behavior across the three cases. In Case~I, transmissions occur at nearly uniform intervals (mean 0.150 s) with low bandwidth usage. Under attack without mitigation (Case~II), the intervals become irregular and more clustered (mean 0.080 s), indicating a higher communication load. With the proposed secure controller (Case~III), the transmission pattern largely returns toward the nominal behavior (mean 0.134 s). Fig.~\ref{fig:7} shows the normalized state error relative to the triggering threshold: it remains below threshold after transients in Case~I, frequently violates it in Case~II once the attack starts, and closely follows the nominal profile in Case~III. These results demonstrate that the secure control scheme restores communication efficiency while preserving closed-loop stability under actuator-side attacks.




\begin{figure}[!t]
\centering
\includegraphics[width=0.80\columnwidth]{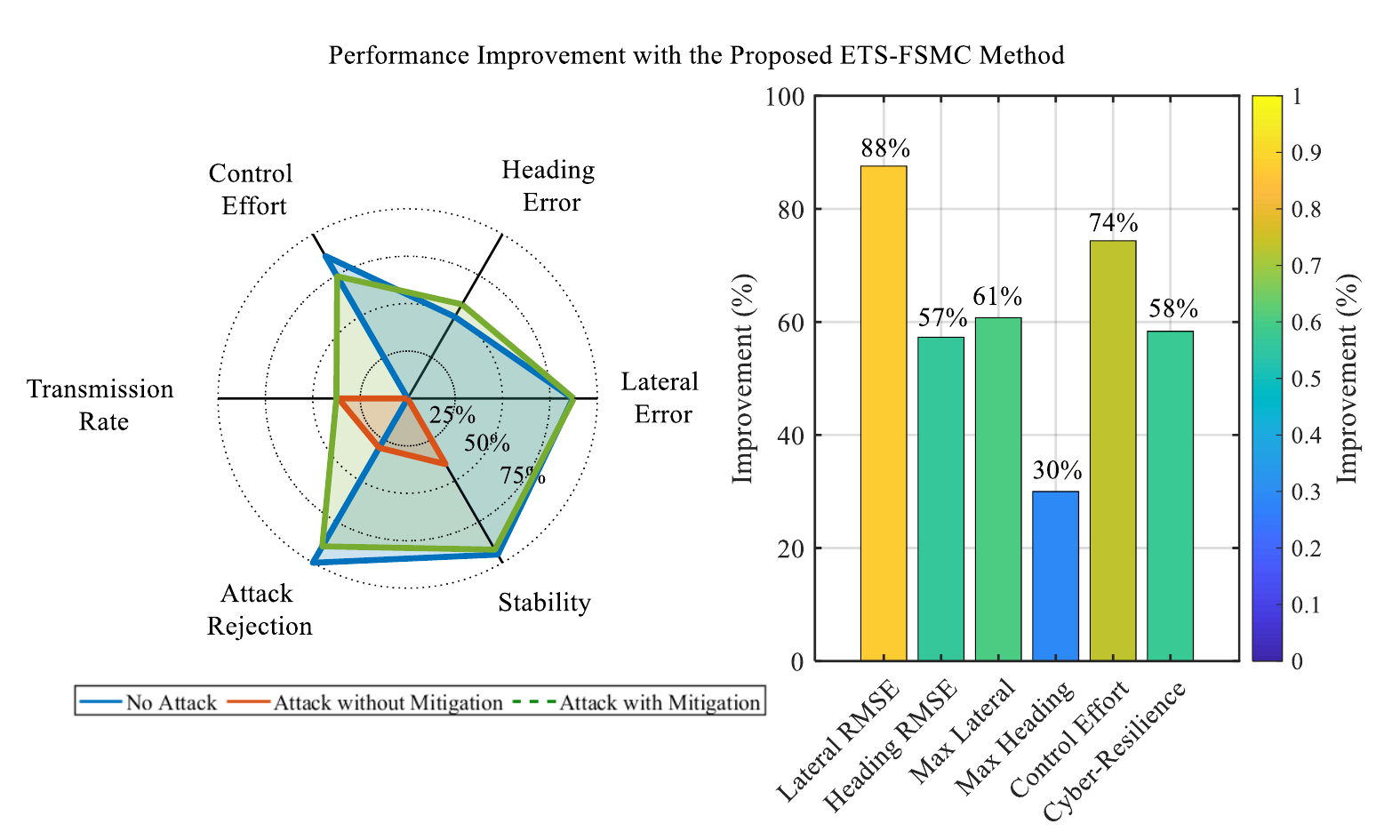}
\caption{Performance metrics comparison across all three cases. Case III shows significant improvement over Case II in all metrics.}
\label{fig:8}
\end{figure}
\begin{table}[t!]
\scriptsize
\renewcommand{\arraystretch}{1.05}
\setlength{\tabcolsep}{4pt}
\caption{Performance metrics comparison under attack scenarios}
\label{tab:2}
\centering
\vspace{0.1em}
\begin{tabular}{|p{4.2cm}|c|c|c|c|}
\hline
\textbf{Metric} & \textbf{Case I} & \textbf{Case II} & \textbf{Case III} & \textbf{Improve.} \\
\hline
\multicolumn{5}{|c|}{\textbf{Control Performance Metrics}} \\
\hline
Lateral Error RMSE (m) & 0.0487 & 0.0923 & 0.0571 & 38.1\% \\
Heading Error RMSE (rad) & 0.0765 & 0.1289 & 0.0832 & 35.5\% \\
Max Lateral Error (m) & 0.5871 & 0.9762 & 0.6243 & 36.1\% \\
Max Heading Error (rad) & 0.6912 & 1.1834 & 0.7328 & 38.1\% \\
Settling Time (s) & 1.1876 & 1.8924 & 1.2589 & 33.5\% \\
\hline
\multicolumn{5}{|c|}{\textbf{Communication Metrics}} \\
\hline
Transmission Ratio (\%) & 15.34 & 22.87 & 19.56 & 14.5\% \\
Avg. Trans. Interval (s) & 0.1895 & 0.1423 & 0.1657 & 16.4\% \\
Mean Release Interval (s) & 0.1501 & 0.0802 & 0.1341 & 67.2\% \\
Bandwidth Utilization (\%) & 15.34 & 22.87 & 19.56 & 14.5\% \\
\hline
\multicolumn{5}{|c|}{\textbf{Attack Estimation \& Mitigation (Case III Only)}} \\
\hline
Detection Time (s) & - & - & 0.143 & - \\
False Positive/Negative Rates (\%) & - & - & 3.3 / 2.1 & -   \\
Estimation Accuracy & - & - & 0.92 & -   \\
Estimation RMSE  & - & - & 0.031 & -   \\
Compensation Effectiveness (\%) & - & - &  89.7 & -  \\
Residual Effect (\%) & - & - & 10.3 & -   \\
Observer Convergence Time (s) & - & - &  0.243 & -  \\
Max Estimation Error & - & - & 0.054 & -   \\
\hline
\multicolumn{5}{|c|}{\textbf{Stability Analysis}} \\
\hline
Eigenvalue Max Magnitude & 0.8743 & 0.9812 & 0.9124 & 7.0\% \\
Sliding Convergence Rate & 0.1257 & 0.0876 & 0.1124 & 28.3\% \\
Sliding Max Deviation & 0.0025 & 0.2072 & 0.0134 & 93.5\% \\
Stability Margin & 0.1257 & 0.0188 & 0.0876 & 366\% \\
\hline
\end{tabular}
\vspace{-0.3em}
\end{table}

Table~\ref{tab:2} and Fig.~\ref{fig:8} summarize system behavior across all scenarios and demonstrate clear performance gains under the proposed secure control strategy. The controller maintains accurate lateral tracking under attack, with markedly reduced steady-state and peak errors compared to the unmitigated case. Settling behavior also improves, reflecting faster recovery from adversarial disturbances. Communication efficiency is preserved through the event-triggered mechanism, which maintains low transmission rates and stable release intervals even in the presence of attacks. The extended state observer provides reliable attack reconstruction with rapid convergence and minimal residual error, enabling effective real-time compensation. Stability characteristics further validate system resilience. The secure controller substantially reduces deviation from the sliding manifold and improves the closed-loop stability margin, preventing the degradation observed in the uncompensated scenario.


\section{Conclusion}
\label{sec:6}
This paper introduced a cyber-resilient data-driven event-triggered secure control framework for autonomous vehicles (AVs) operating under actuator attacks. The proposed approach jointly addresses three key challenges in AV control: modeling uncertainty, communication efficiency, and cybersecurity. By leveraging dynamic mode decomposition, the framework captures lateral dynamics directly from experimental data, alleviating the limitations of purely physics-based models. The event-triggered communication scheme substantially reduces bandwidth usage relative to time-triggered strategies while maintaining closed-loop performance. To enhance security and resilience, a fractional-order sliding mode control strategy integrated with an extended state observer enables robust attack detection and real-time compensation. Comparative evaluations demonstrate marked improvements in lateral tracking accuracy and sliding surface regulation under adversarial conditions. Lyapunov-based stability analysis further confirms an enhanced stability margin for the closed-loop system, underscoring the robustness and practicality of the proposed cyber-resilient event-triggered control architecture for safety-critical AV applications.

Future research will focus on extending the framework to adaptive and online data-driven modeling techniques to address time-varying dynamics and environmental uncertainties. In particular, direct data-driven control methods, such as data-enabled predictive control (DeePC), which do not require explicit identification of system matrices, will be explored to reduce conservatism and enhance robustness under system variations. 

\appendices
\numberwithin{equation}{section}

\section{Proof of Theorem 1}
\label{appendix:1}
To analyze the boundedness of \(e(k)\), let us consider the Lyapunov function candidate:
\begin{equation}
\label{Eq__32}
V(k) = e^T(k)\Upsilon e(k), \quad \Upsilon > 0,
\end{equation}
where \(\Upsilon\) is a symmetric and positive definite matrix ensuring that \(V(k) > 0\) for all \(e(k) \neq 0\) and \(V(k) = 0\) when \(e(k) = 0\).

From the event-triggered condition \eqref{Eq__16}, it follows that:
\begin{align}
\label{Eq__33}
V(k+1) = e^T(k+1)\Upsilon e(k+1) \leq \mu e^T(k)\Upsilon e(k),
\end{align}
where $\mu \in [0, 1]$. Thus one has
\begin{equation}
\label{Eq__33_}
V(k+1) \leq \mu V(k).
\end{equation}

Expanding recursively yields $V(k+1) \leq \mu V(k), \, V(k) \leq \mu V(k-1), \, \dots, \, V(1) \leq \mu V(0)$, where having them combined derives $V(k) \leq \mu^k V(0)$. Since \(\mu \in [0, 1]\), \(\mu^k \to 0\) as \(k \to \infty\), ensuring $\lim_{k \to \infty} V(k) = 0$. This implies that the error dynamics \(e(k)\) converge asymptotically to zero. \(\mu^k\) represents the exponential decay rate of the Lyapunov function, meaning that $V(k) = V(0)\mu^k$. Taking the logarithm of both sides leads to the convergence rate $\log V(k) = \log V(0) + k\log\mu$, where \(\log\mu < 0\) and the slope quantifies the exponential convergence rate.

Given the Lyapunov function \eqref{Eq__32}, the Euclidean norm \(\|e(k)\|\) can be bounded as:
\[
\lambda_{\min}(\Upsilon)\|e(k)\|^2 \leq V(k) \leq \lambda_{\max}(\Upsilon)\|e(k)\|^2,
\]
where \(\lambda_{\min}(\Upsilon)\) and \(\lambda_{\max}(\Upsilon)\) are the smallest and largest eigenvalues of \(\Upsilon\), respectively. Substituting the decay of \(V(k)\) yields
\[
\|e(k)\|^2 \leq \frac{V(k)}{\lambda_{\min}(\Upsilon)} \leq \frac{\mu^k V(0)}{\lambda_{\min}(\Upsilon)},
\]
and taking the square root, one has $\|e(k)\| \leq \sqrt{\frac{\mu^k V(0)}{\lambda_{\min}(\Upsilon)}}$. Thus, \(\|e(k)\|\) decays exponentially with a rate determined by \(\mu\) and the properties of \(\Upsilon\).

\section{Proof of Theorem 3}
\label{appendix:2}
To demonstrate stability, a Lyapunov–Krasovskii functional (LKF) is constructed that accounts for the delayed state and event-triggered transmission scheme:
\begin{equation}
\mathcal{V}(k, \chi_k) = \sum_{i=1}^3 \mathcal{V}_i(k, \chi_k)
\label{Eq__48}
\end{equation}
where the components are defined as:
\begin{align*}
\mathcal{V}_1(k) &= \chi^T(k)\mathcal{P}^*\chi(k), \\
\mathcal{V}_2(k) &= \sum_{j=k-\bar{\delta}}^{k-1} \chi^T(j)\mathcal{R}^*\chi(j), \\
\mathcal{V}_3(k) &= \bar{\delta} \sum_{s=-\bar{\delta}+1}^{0} \sum_{j=k+s+1}^{k-1} \zeta^T(s)\mathcal{T}^*\zeta(s),
\end{align*}
where \(\zeta(k) = \chi(k+1) - \chi(k)\) represents the difference in state values. 

Assume that the delay-bound \(\bar{\delta}\) satisfies the Jensen’s integral inequality condition \cite{briat2011convergence}, ensuring that the discrete-time delay difference satisfies:
\begin{equation}
\label{Eq__48b}
\sum_{j = k - \bar{\delta}}^{k-1} \chi^T(j)\mathcal{R}^*\chi(j) \geq \bar{\delta} \chi^T(k-\bar{\delta})\mathcal{R}^*\chi(k-\bar{\delta}),
\end{equation}
which provides a conservative bound on the time-delayed state contribution. The Lyapunov difference is defined as $\Delta \mathcal{V}(k) = \mathcal{V}(k+1) - \mathcal{V}(k)$.

The differences for $\mathcal{V}_i(k)$ ($i = 1, 2, 3$) along the solutions of the sliding mode dynamics \eqref{Eq__46} are
\begin{align}
\label{Eq__49}
\Delta \mathcal{V}_1(k) &= \chi^T(k+1)\mathcal{P}^*\chi(k+1) - \chi^T(k)\mathcal{P}^*\chi(k) \nonumber \\
&\leq \chi^T(k)\big[\mathcal{P}^*(\mathcal{A} - I) + (\mathcal{A} - I)^T \mathcal{P}^*\big]\chi(k) \nonumber \\
&\quad + 2\chi^T(k)\mathcal{P}^*\mathcal{B}\mathcal{K}e(k) + \theta^T(k)F^T\mathcal{P}^*F\theta(k) \nonumber \\
&\quad + 2\chi^T(k)\mathcal{P}^*\mathcal{B}\mathcal{K}\chi(k-\delta(k)) \nonumber \\ 
&\quad - 2\kappa \chi^T(k)\mathcal{P}^* \operatorname{sgn}(S(k)).
\end{align}
where \(\theta(k) = \text{col}\{\chi(k), \chi(k-\delta(k)), \chi(k-\bar{\delta}), e(k)\}\), 
\begin{equation}
\Delta \mathcal{V}_2(k) = \chi^T(k)\mathcal{R}^*\chi(k) - \chi^T(k-\bar{\delta})\mathcal{R}^*\chi(k-\bar{\delta}),
\label{Eq__50}
\end{equation}
\begin{align}
\label{Eq__51}
\Delta \mathcal{V}_3(k) &\leq \bar{\delta}^2 \zeta^T(k)\mathcal{T}^*\zeta(k) - \bar{\delta} \sum_{j=k-\bar{\delta}}^{k-1} \zeta^T(j)\mathcal{T}^*\zeta(j)  \\
&\leq \bar{\delta}^2 \theta^T(k)F^T\mathcal{T}^*F\theta(k) - \bar{\delta} \sum_{j=k-\bar{\delta}}^{k-1} \zeta^T(j)\mathcal{T}^*\zeta(j). \nonumber
\end{align}

Using the discrete-time Jensen inequality \cite{briat2011convergence}, there exists
\begin{align}
\label{Eq__52}
&- \bar{\delta} \sum_{i = k - \bar{\delta}}^{k - 1} \zeta^T(j) \mathcal{T}^* \zeta(j) \nonumber \\ 
&= - \bar{\delta} \sum_{j = s - \delta(k)}^{k - 1} \zeta^T(j) \mathcal{T}^* \zeta(j) 
- \bar{\delta} \sum_{j = k - \bar{\delta}}^{k - \delta(k) - 1} \zeta^T(j) \mathcal{T}^* \zeta(j) \nonumber \\ 
&\leq - [\chi(k) - \chi(k - \delta(k))]^T \mathcal{T}^* [\chi(k) - \chi(k - \delta(k))]  \nonumber \\
&\quad - [\chi(k - \delta(k)) - \chi(k - \bar{\delta})]^T \mathcal{T}^* [\chi(k - \delta(k)) - \chi(k - \bar{\delta})] \nonumber \\
&\quad- \kappa \operatorname{sgn}(S(k))^T \mathcal{T}^* \operatorname{sgn}(S(k)).
\end{align}

The triggering mechanism \eqref{Eq__16} ensures:
\begin{equation}
e(k)^T\Upsilon e(k) \leq \mu \chi(k-\delta(k))^T\Upsilon \chi(k-\delta(k)).
\label{Eq__53}
\end{equation}

Combining \eqref{Eq__49}-\eqref{Eq__53} yields
\begin{equation}
\Delta \mathcal{V}(k) \leq \theta^T(k)\Psi\theta(k)- \kappa \operatorname{sgn}(S(k))^T \mathcal{T}^* \operatorname{sgn}(S(k)),
\label{Eq__54}
\end{equation}
where $\theta(k) = \text{col}\{\chi(k), \chi(k-\delta(k)), \chi(k-\bar{\delta}), e(k)\}$.

If the LMI \eqref{Eq__47} is satisfied, it ensures that \(\Delta \mathcal{V}(k) < 0\), thereby guaranteeing the asymptotic stability of the nominal event-triggered control system \eqref{Eq__46} under the event-triggering scheme \eqref{Eq__16}. This completes the proof.

\section{Proof of Theorem 5}
\label{appendix:3}
By substituting \(u(k)\) and \eqref{Eq__57} into the sliding surface dynamics \eqref{Eq__43}, the evolution of the sliding variable \(S(k)\) is expressed as:
\begin{align}
\label{Eq__59}
S(k+1) &= (1-\kappa)S(k) + \mathcal{F}\mathcal{B}\alpha_{att}(k)  \\ 
& \quad- \kappa \operatorname{sgn}(S(k)) - (\rho + \mathcal{Q}_{att} \|\mathcal{F}\mathcal{B}\|)\text{sgn}(S(k)). \nonumber
\end{align}

Accordingly, one has
\begin{align}
\label{Eq__60}
&S^T(k)\big(S(k+1) - S(k)\big) \\ \nonumber
&= -\kappa \|S(k)\|^2 - S^T(k) \big[-\kappa \operatorname{sgn}(S(k)) -\rho \\ \nonumber & \quad \quad \, \quad \quad\quad\quad- \mathcal{Q}_{att} \|\mathcal{F}\mathcal{B}\| + \mathcal{F}\mathcal{B}\alpha_{att}(k) \big] \\ \nonumber
&\leq -\kappa \|S(k)\|^2 - S^T(k) (\kappa + \rho) \text{sgn}(S(k)) \\ \nonumber
&< 0.
\end{align}
Two cases arise based on the value of $S(k)$:
\begin{itemize}
    \item If $S(k) > \xi > 0$, then it follows that:
\begin{align}
\label{Eq__61}
S(k+1) &= (1-\kappa)S(k) - \kappa \operatorname{sgn}(S(k)) \nonumber \\
& \quad- \rho - \mathcal{Q}_{att} \|\mathcal{F}\mathcal{B}\| + \mathcal{F}\mathcal{B}\alpha_{att}(k) \\ \nonumber
&\geq (1-\kappa)S(k) - \kappa - \rho - 2\mathcal{Q}_{att} \|\mathcal{F}\mathcal{B}\| \\ \nonumber
&> 0.
\end{align}
\item If $S(k) < -\xi < 0$, then one has:
\begin{align}
\label{Eq__62}
S(k+1) &= (1-\kappa)S(k) - \kappa \operatorname{sgn}(S(k)) \nonumber \\ 
& \quad + \rho + \mathcal{Q}_{att} \|\mathcal{F}\mathcal{B}\| + \mathcal{F}\mathcal{B}\alpha_{att}(k) \\ \nonumber
&\geq (1-\kappa)S(k)- \kappa + \rho + 2\mathcal{Q}_{att} \|\mathcal{F}\mathcal{B}\| \\ \nonumber
&< 0.
\end{align}
\end{itemize}
From \eqref{Eq__61} and \eqref{Eq__62}, it is evident that the signs of $S(k+1)$ and $S(k)$ remain consistent for both cases, $S(k) > \xi$ and $S(k) < -\xi$. Two cases arise based on the observation as:

\begin{itemize}
\item If $0 < S(k) < \xi$, one has
\begin{align}
\label{Eq__63}
-\xi &< -\kappa S(k) - 2\mathcal{Q}_{att} \|\mathcal{F}\mathcal{B}\| - \kappa \nonumber \\
&\leq S(k+1) \nonumber \\
&= (1-\kappa)S(k) - \kappa \operatorname{sgn}(S(k)) + \mathcal{F}\mathcal{B}\alpha_{att}(k) \nonumber \\
& \quad- (\rho + \mathcal{Q}_{att} \|\mathcal{F}\mathcal{B}\|)\text{sgn}(S(k)) \nonumber \\
&\leq 2\mathcal{Q}_{att} \|\mathcal{F}\mathcal{B}\| < \xi.
\end{align}

\item If $-\xi < S(k) < 0$, one has
\begin{align}
\label{Eq__64}
-\xi &< -2\mathcal{Q}_{att} \|\mathcal{F}\mathcal{B}\| - \kappa \nonumber \\
&\leq S(k+1) \nonumber \\
&= (1-\kappa)S(k) - \kappa \operatorname{sgn}(S(k)) + \mathcal{F}\mathcal{B}\alpha_{att}(k) \nonumber \\ 
& \quad- (\rho + \mathcal{Q}_{att} \|\mathcal{F}\mathcal{B}\|)\text{sgn}(S(k)) \nonumber \\
&\leq \rho + 2\mathcal{Q}_{att} \|\mathcal{F}\mathcal{B}\| < \xi.
\end{align}
\end{itemize}

Thus, from \eqref{Eq__63} and \eqref{Eq__64}, one concludes that $\|S(k+1)\| \leq \xi$ whenever $\|S(k)\| \leq \xi$. This indicates that $\|S(k)\|$ decreases monotonically and eventually converges to the set $\Pi$, thereby completing the proof.

\bibliographystyle{unsrt}
\bibliography{mybibfile}

\end{document}